\documentclass[letterpaper,prl,superscriptaddress,twocolumn,10pt]{revtex4}
\usepackage{times,color,amsmath,graphicx}
\usepackage[colorlinks,breaklinks]{hyperref}
\hypersetup{breaklinks=true}
\usepackage[hyphenbreaks]{breakurl}

\usepackage{ulem}
\newcommand{\oper}[1]{\hat{#1}}
\newcommand{\ts}{{\Delta \tau}}
\newcommand{\idx}{m}

\usepackage{times}
\usepackage{amsmath,amsfonts}
\usepackage{color}

\newcommand{\REVISION}[1]{#1}
\newcommand{\operSB}[2]{\hat{#1}[#2]}
\newcommand{\SECONDREVISION}[1]{#1}

\usepackage{tikz}
\usetikzlibrary{matrix,patterns,shapes.arrows,arrows.meta,shapes}

\usepackage{pgfplots,pgfplotstable,amsmath,amsfonts,amsthm}
\usetikzlibrary{patterns,calc,calc,patterns,angles,quotes}

\usepackage{pgf}
\usepackage{tikz}
\usetikzlibrary{arrows,automata}
\usepackage[latin1]{inputenc}
\usepackage{verbatim}

\newtheorem{theorem}{Theorem}

\usepackage{times,color,amsmath,graphicx}
\usepackage[colorlinks,breaklinks]{hyperref}
\hypersetup{breaklinks=true}
\usepackage[hyphenbreaks]{breakurl}

\newcommand{\dt}{{\Delta\tau}}

\begin{document}

\title{\SECONDREVISION{Determining eigenstates and thermal states on a quantum computer \\ using quantum imaginary time evolution}}
\author{Mario Motta}
\email[Corresponding author. ORCID  0000-0003-1647-9864. E-mail: ]{mariomotta31416@gmail.com}
\affiliation{Division of Chemistry and Chemical Engineering, California Institute of Technology, Pasadena, CA 91125, USA}
\author{Chong Sun}
\affiliation{Division of Chemistry and Chemical Engineering, California Institute of Technology, Pasadena, CA 91125, USA}
\author{Adrian T. K. Tan}
\affiliation{Division of Engineering and Applied Science, California Institute of Technology, Pasadena, CA 91125, USA}
\author{Matthew J. O'Rourke}
\affiliation{Division of Chemistry and Chemical Engineering, California Institute of Technology, Pasadena, CA 91125, USA}
\author{Erika Ye}
\affiliation{Division of Engineering and Applied Science, California Institute of Technology, Pasadena, CA 91125, USA}
\author{Austin J. Minnich}
\affiliation{Division of Engineering and Applied Science, California Institute of Technology, Pasadena, CA 91125, USA}
\author{Fernando G. S. L. Brand\~{a}o}
\affiliation{Institute for Quantum Information and Matter, California Institute of Technology, Pasadena, CA 91125, USA}
\author{Garnet Kin-Lic Chan}
\email[Corresponding author.  ORCID  0000-0001-8009-6038. E-mail: ]{gkc1000@gmail.com}
\affiliation{Division of Chemistry and Chemical Engineering, California Institute of Technology, Pasadena, CA 91125, USA}

\begin{abstract}
  \SECONDREVISION{The accurate computation of Hamiltonian ground, excited, and thermal states on quantum computers
    stands to impact many problems in the physical and computer sciences, from quantum simulation to machine learning.
    Given the challenges posed in constructing large-scale quantum computers, these tasks should be carried out
    in a resource-efficient way. In this regard, existing techniques based on phase estimation or variational algorithms
    display potential disadvantages; phase estimation requires deep circuits with ancillae, that are hard to execute reliably without
    error correction, while variational algorithms, while flexible with respect to circuit depth, entail additional high-dimensional
    classical optimization.
    Here, we introduce the quantum imaginary time evolution and quantum Lanczos algorithms, which are analogues of classical algorithms for finding
    ground and excited states. Compared to their classical counterparts, they require exponentially less space and time per iteration,
    and can be implemented
    without deep circuits and ancillae, or high-dimensional optimization. We furthermore discuss quantum imaginary time evolution as a
    subroutine to generate Gibbs averages through an analog of minimally entangled typical thermal states. Finally,
    we demonstrate the potential of these algorithms via an implementation using exact classical emulation  as well as
    through prototype circuits on the Rigetti quantum virtual machine and Aspen-1 quantum processing unit.}
\end{abstract}
\maketitle

An important application for a quantum computer is to compute the ground-state $\Psi$ of a Hamiltonian
$\oper{H}$
\cite{Feynman_IJTP_1982,Abrams_PRL_1997}. 
This arises in simulations, for example, of the electronic structure of molecules and materials, 
\cite{Lloyd_Science_1996,Aspuru_Science_2005,kandala_2017_nature,kandala_2018_arxiv}
as well as in more general optimization problems. 
While efficient ground-state determination
cannot be guaranteed for all Hamiltonians, as this is a QMA-hard
problem 
\cite{Kempe_SIAM_2004}, several heuristic quantum algorithms 
have been proposed, including adiabatic state preparation with quantum phase estimation 
\cite{Farhi_MIT_2000,Kitaev_arxiv_1995} (QPE) and quantum-classical variational algorithms,
such as the quantum approximate optimization algorithm 
\cite{Farhi_MIT_2014,Otterbach_arxiv_2017,Moll_QST_2018} and variational quantum eigensolver 
\cite{Peruzzo_Nature_2013,McClean_NJP_2016,grimsley2018adapt}.
Despite many advances, these algorithms also have potential disadvantages, 
especially in the context of near-term quantum computing architectures with limited quantum resources. 
For example, phase estimation produces a nearly exact eigenstate, but appears impractical without error correction, while
variational algorithms, though somewhat robust to coherent errors, are limited in accuracy by a fixed Ansatz,
and involve high-dimensional noisy classical optimizations~\cite{mcclean2018barren}.

In classical simulations, different strategies are employed to numerically determine nearly exact ground-states.
One popular approach is imaginary-time evolution, 
which expresses the ground-state as the long-time limit of the imaginary-time Schr\"odinger equation 
$- \partial_\beta |\Phi(\beta)\rangle 
= \oper{H} |\Phi(\beta)\rangle$, $|\Psi\rangle 
= \lim_{\beta \to \infty} \frac{|\Phi(\beta)\rangle}{ \| \Phi(\beta) \|}$ (for $\langle \Phi(0) | \Psi \rangle \neq 0$). 
Unlike variational algorithms with a fixed Ansatz, imaginary-time evolution always converges to the 
ground-state, as distinguished from imaginary-time Ansatz optimization \cite{McArdle_arxiv_2018}.
\SECONDREVISION{Another family of approaches are variants of the iterative Lanczos method 
\cite{Lanczos_somewhere_1950}.}
The Lanczos iteration constructs the Hamiltonian matrix $\mathbf{H}$ in a Krylov subspace
$\{ |\Phi\rangle, \oper{H} |\Phi\rangle, \oper{H}^2|\Phi\rangle \ldots \}$; diagonalizing $\mathbf{H}$
yields a variational estimate of the ground-state which tends to $|\Psi\rangle$ for a large number
of iterations. For an $N$-qubit Hamiltonian, the classical complexity of imaginary time evolution and 
Lanczos algorithm scales as $\sim \exp{(\mathcal{O}(N))}$ in space and time.
Exponential space comes from storing $\Phi(\beta)$ or the Lanczos vector, while exponential time
comes from the cost of Hamiltonian multiplication $\oper{H} |\Phi\rangle$, as well as, in principle,
though not in practice, the $N$-dependence of the number of propagation steps or Lanczos iterations.
Thus it is natural to consider quantum versions of these algorithms that can overcome the exponential bottlenecks.

\SECONDREVISION{
Here we describe the quantum imaginary time evolution (QITE), the quantum Lanczos (QLanczos)
and the quantum analog of the minimally entangled typical thermal states (QMETTS) algorithm,
to determine ground-states, ground and excited states and thermal states on a quantum computer.}
\REVISION{Under the assumption of finite correlation length, these methods rigorously use exponentially reduced
space and time per propagation step or iteration, compared to their direct classical counterparts.
Even when such assumptions do not hold, the inexact versions of the QITE and QLanczos algorithms remain valid heuristics
that can be applied within a limited computational budget, and
offer advantages over existing ground-state quantum algorithms, as they do not
use deep circuits and converge to their solutions without non-linear optimization.}
A crucial common component is the efficient implementation of the non-Hermitian operation of an imaginary-time 
step $e^{-\ts \oper{H} }$ (for small $\ts$) assuming a finite correlation length in the state.
Non-Hermitian operations are not natural on a quantum computer and are usually achieved using ancillae and postselection,
but we describe how to implement imaginary time evolution on a given state without these resources.
The lack of ancillae and complex circuits make \SECONDREVISION{our algorithms} potentially suitable for near-term quantum architectures. 
We demonstrate the algorithms on spin and fermionic Hamiltonians 
using exact classical emulation, and demonstrate
proof-of-concept implementations on the Rigetti quantum virtual machine (QVM) and Aspen-1 quantum processing units (QPUs).

\noindent{\bf Quantum Imaginary Time Evolution}. Define a geometric $k$-local Hamiltonian $\oper{H} = \sum_\idx \operSB{h}{\idx}$ 
(where each term $\operSB{h}{\idx}$ acts on at most $k$ neighboring qubits on an underlying graph)
and a Trotter decomposition of the corresponding imaginary-time evolution,
\begin{align}
  e^{-\beta \oper{H}} = (e^{-\ts \operSB{h}{1}} e^{-\ts \operSB{h}{2}} \ldots)^n + \mathcal{O}\left( {\ts} \right); \ n= \frac{\beta}{\ts}
\end{align}
applied to a state $|\Psi\rangle$. After a single Trotter step, we have
\begin{align}
  |\Psi^\prime \rangle = e^{-\ts \operSB{h}{\idx}} |\Psi\rangle. \quad 
\end{align}
The basic idea is that the normalized state $|\bar{\Psi}^\prime \rangle = |\Psi^\prime \rangle / \| \Psi^\prime \|$ is
generated from $|\Psi\rangle$ by a unitary operator $e^{-i \ts \operSB{A}{\idx}}$ acting on a neighborhood of the qubits 
acted on by $\operSB{h}{\idx}$, where $\operSB{A}{\idx}$ can be determined from tomography of 
$|\Psi\rangle$ in this neighborhood up to controllable errors. 
This is illustrated by the simple example where $|\Psi\rangle$ is a product state. The squared norm
\SECONDREVISION{$c = \langle \Psi | e^{-2\ts \operSB{h}{\idx}} | \Psi\rangle$} can be calculated from the expectation value of $\operSB{h}{\idx}$, requiring measurements 
over $k$ qubits,
\begin{align}
\SECONDREVISION{c = 
1 - 2\ts \langle \Psi|  \operSB{h}{\idx} |\Psi\rangle + \mathcal{O}(\ts^2)} \;.
\end{align}
Because $|\Psi \rangle$ is a product state, $|\Psi^\prime \rangle$ is obtained applying the unitary operator 
$e^{-i\ts \operSB{A}{\idx}}$ also on $k$ qubits. $\operSB{A}{\idx}$ can be expanded in terms of an operator basis, 
e.g. the Pauli basis $\{ \oper{\sigma}_i \}$ on $k$ qubits,
\begin{align}
\operSB{A}{\idx} = \sum_{i_1 \ldots i_k} a[\idx]_{i_1  \ldots i_k} \oper{\sigma}_{i_1} \ldots \oper{\sigma}_{i_k}
\SECONDREVISION{\equiv \sum_I a[\idx]_I \oper{\sigma}_I}
. \label{eq:aoperator}
\end{align}
Up to $\mathcal{O}(\ts)$, the coefficients $a[\idx]_{I}$ are defined by
the linear system $\mathbf{S} \mathbf{a}[\idx] = \mathbf{b}$ where the elements of $\mathbf{S}$ and $\mathbf{b}$ 
are expectation values over $k$ qubits, 
\begin{align}
\SECONDREVISION{
S_{I,I^\prime} = \langle \Psi| \oper{\sigma}_{I}^\dag
\oper{\sigma}^{\phantom{\dag}}_{I^\prime}|\Psi\rangle
\,\,,\,\,
b_I  = \frac{-i}{\sqrt{c}} \, \langle \Psi| \oper{\sigma}_{I}^\dag \operSB{h}{\idx} |\Psi\rangle
\,\,.
}
\end{align}
In general, $\mathbf{S}$ has a null space; to ensure $\mathbf{a}[\idx]$ is real, we minimize
$\|  \REVISION{\bar{\Psi}^\prime} - (1-i\ts \operSB{A}{\idx}) \Psi\|^2$ w.r.t. real variations in $\mathbf{a}[\idx]$ (see SI). 
Because the solution is determined from a linear problem, there are no local minima. 

In this simple case, the normalized result of the imaginary time evolution step could be represented by a
unitary update over $k$ qubits, because $|\Psi\rangle$ had correlation length zero. After the initial step, this is no longer the case.
However, for a more general $|\Psi\rangle$ with finite correlations over at most $C$ qubits (i.e.
correlations between observables separated by distance $L$ are bounded by $\exp(- L /C )$),
$| \REVISION{\bar{\Psi}^\prime} \rangle$ can be generated by a unitary acting on a domain of width at most $O(C)$ qubits surrounding
the qubits acted on by $\operSB{h}{m}$. \REVISION{This follows from Uhlmann's theorem~\cite{uhlmann}, which states that
two pure states with marginals close to each other must be related by a unitary transformation on the purifying subsystem (see SI)}. 
The unitary $e^{-i \ts \operSB{A}{m}}$ can then be determined by measurements and solving the least squares problem in this domain (Fig.~\ref{fig:1}). For example, for a nearest-neighbor local Hamiltonian on a $d$-dimension cubic lattice, the domain size $D$ is bounded by $O(C^d)$.
In many physical systems, we expect the maximum correlation length throughout the Trotter steps to increase with $\beta$ and saturate for $C_{\text{max}} \ll N$ \cite{Hastings_CMP_2005}. Fig.~\ref{fig:1}  shows the mutual information between qubits $i$ and $j$ as a function of
imaginary time in the 1D and 2D ferromagnetic transverse field Ising models computed by tensor network simulation (see SI),
demonstrating a monotonic increase and clear saturation. 

The above replacement of imaginary time evolution steps by unitary updates can be extended to more general Hamiltonians,
such as ones with long-range interactions and fermionic Hamiltonians. 
For fermions, in particular, the locality of the corresponding qubit Hamiltonian depends
on the qubit mapping. In principle, a geometric $k$-local fermionic Hamiltonian
can be mapped to a geometric local qubit Hamiltonian~\cite{BRAVYI2002210,Verstraete_JSM_2005}, allowing above techniques to be applied directly.
Alternatively, we conjecture that \SECONDREVISION{by constructing Eq.~\eqref{eq:aoperator} with a local fermionic basis, 
the unitary update can be constructed over a domain size $D \sim O(C^d)$, $C$ being the fermionic correlation length (see SI)}.

\begin{figure*}[t!]
\centering
\includegraphics[width=0.8\textwidth]{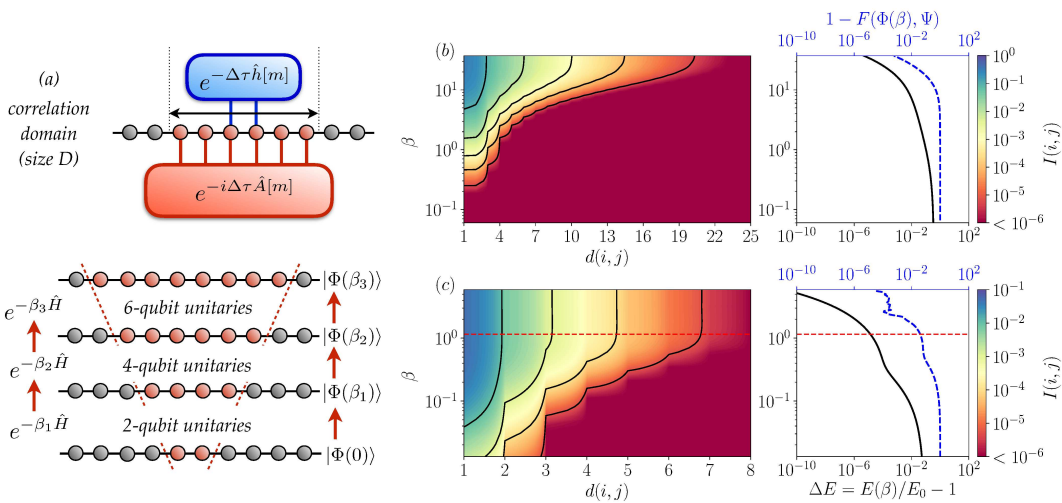} 
\vspace{-0.25cm}
\caption{\SECONDREVISION{
{\bf{Physical foundations of the quantum imaginary time evolution algorithm.}}
}
(a) Schematic of the QITE algorithm.
Top: imaginary-time evolution under a geometric $k$-local operator $\hat{h}[m]$
can be reproduced by a unitary operation acting on $D>k$ qubits.
Bottom: 
exact imaginary-time evolution starting from a product state requires 
unitaries acting on a domain $D$ that grows with correlations.
(b,c) Left: mutual information $I(i,j)$ between qubits $i$, $j$ as a function of 
distance $d(i,j)$ and imaginary time $\beta$, for a 1D (b) and a 2D (c) 
FM transverse-field Ising model, \REVISION{with $h=1.25$, 50 qubits and $h=3.5$, $21 \times 31$ qubits respectively.} 
$I(i,j)$ saturates at longer times.
Right: relative error in the energy $\Delta E$ and fidelity
$F= |\langle \Phi(\beta) | \Psi \rangle |^2$ between
the finite-time state $\Phi(\beta)$ and infinite-time state $\Psi$ as a function
of $\beta$. The noise in the 2D fidelity error at large $\beta$ arises from
the approximate nature of the algorithm used (see SI for details).}
\label{fig:1}
\end{figure*}

\noindent{\bf Cost of QITE}. The number of measurements and classical storage at a given time step (starting propagation from a product state) is bounded by $\exp(O(C^d))$ (with $C$ the correlation length at that time step), since each unitary at that step acts on at most $O(C^d)$ sites; classical solution of the least squares problem has a similar scaling  $\exp(O(C^d))$, as does the synthesis and application as a quantum circuit (composed of two-qubit gates) of the unitary $e^{-i \ts \operSB{A}{m}}$. Thus, space and time requirements are bounded by exponentials in $C^d$, but are
polynomial in $N$ when one is interested in a local approximation of the state (or quasi-polynomial for a global approximation); the polynomial in $N$ comes from the number of terms in $H$ (see SI for details).

The exponential dependence on $C^d$ can be greatly reduced in many cases, \SECONDREVISION{for example if
  $\operSB{A}{m}$ has a locality structure, e.g. if it is (approximately) a $p$-local Hamiltonian
 (i.e. all $a[m]_{i_1 \ldots i_k}$ in Eq.~\eqref{eq:aoperator} are zero except for those where at most $p$ of the $\oper{\sigma}_i$ operators differ from the identity) then the cost of tomography becomes only $C^{O(dp)}$, while the cost of finding and implementing the unitary is $O(p  C^d T_e)$, $T_e$ being the cost of computing one entry of $\operSB{A}{m}$ \cite{berry2015hamiltonian}. If we assume further that $\operSB{A}{m}$ is geometric local, the cost of tomography is reduced further to $O(p C^{d})$.}
\REVISION{However, it is important to note that even if $C$ is too large to construct the unitaries exactly, we can still run the algorithm as a heuristic, truncating the unitary updates to domain sizes that fit the computational budget. This gives the inexact QITE algorithm, described and
studied in detail below.}

Compared to a direct classical implementation of imaginary time evolution, the cost of a QITE time-step (for bounded correlation length $C$) 
is linear in $N$ in space and polynomial in $N$ in time, thus giving an exponential reduction in space and time. 
\SECONDREVISION{Note that a finite correlation length $C_0$ in the ground-state does not generally imply an efficient
  classical strategy. In the SI, we analyze multiple classical heuristics under the assumption of finite ground-state correlations, including:
  truncating the problem size at the ground-state correlation length $C_0$, classical simulation in the Heisenberg representation, and tensor network calculations \cite{VidalTEBD,Schollwock_Annals_2011,Schuch_PRL_2007,PEPShard2018}.
}

\noindent{\bf Inexact QITE}. Given limited resources, for example on near-term devices, we can choose to measure and construct
the unitary over a domain $D$ smaller than induced by correlations, to fit the computational budget. For example, if $D = 1$, this gives a mean-field
approximation of the imaginary time evolution, \REVISION{and larger $D$ gives successively better approximations to the ground-state.}
Importantly, while the unitary is no longer an exact representation of the imaginary time evolution, there is
no issue of a local minimum in its construction, although
the energy is no longer guaranteed to decrease at every step. In this case, one can apply inexact imaginary time evolution
until the energy stops decreasing; the energy will still be a variational upper bound. One can also use the quantum Lanczos algorithm, described later.

\begin{figure*}[t!]
\centering
\includegraphics[width=0.9\textwidth]{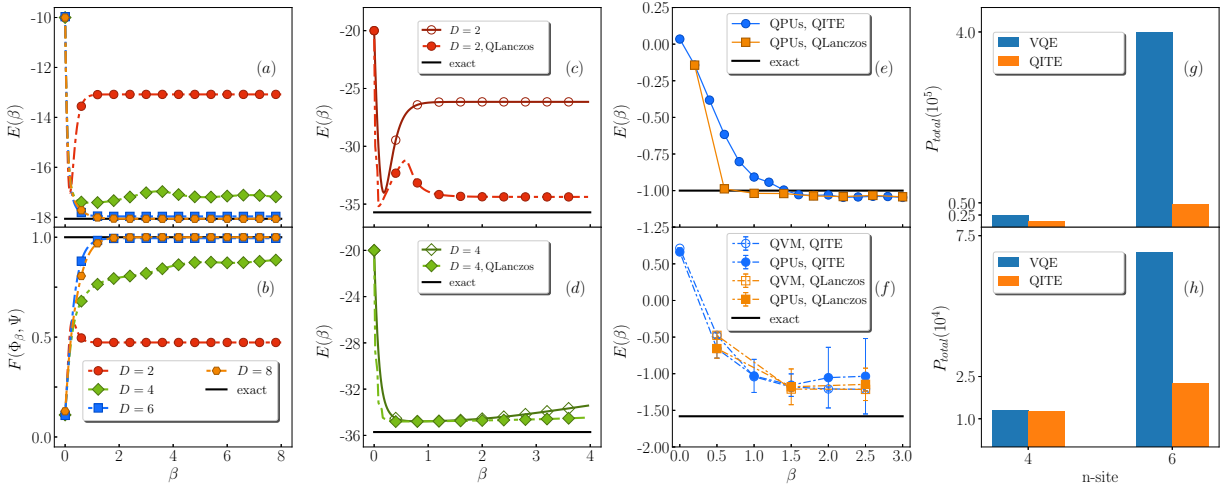}
\vspace{-0.25cm}
\caption{ {\bf{Classical simulation and experimental implementation of QITE and QLanczos algorithms.}}
Column 1: (a) QITE energy $E(\beta)$ and (b) fidelity $F$ between finite-time state
$\Phi(\beta)$ and exact ground state $\Psi$ as function of $\beta$, 
  for a 1D 10-site Heisenberg model, showing convergence with increasing $D$.
Column 2: QITE (dashed lines) and QLanczos (solid lines) energies 
as function of $\beta$, for a 1D Heisenberg model with $N=20$ qubits, using domains 
of $D=2$ (c) and $4$ qubits (d), showing improved convergence of QLanczos over QITE. 
  Column 3: QITE and QLanczos energy as a function of $\beta$ for (e)
  a 1-qubit model and (f) a 2-qubit AFM transverse field Ising model
  using QVMs and QPUs.
  Black lines denote the exact ground-state energy or maximum fidelity.
  \REVISION{Column 4: Estimate of the number of Pauli string expectation values ($P_{total}$) needed for QITE and VQE to converge within (g) 1$\%$ of
    the exact energy for a 4-site (left) and 6-site (right) 1D Heisenberg model with  magnetic field, and (h) 1$\%$ (2$\%$) of the exact energy for a 4-site (6-site) 1D AFM transverse-field Ising model.}
    \SECONDREVISION{Error bars represent standard deviations computed from multiple runs.}}
\label{fig:2}
\end{figure*}

\noindent{\bf Illustrative QITE calculations.} 
To illustrate the QITE algorithm, we have carried out exact classical emulations (assuming perfect
expectation values and gates) for several Hamiltonians (see SI): short-range 1D Heisenberg \REVISION{(with and without a field)};
1D AFM transverse-field Ising; long-range 1D Heisenberg with spin-spin coupling
$J_{ij} =(|i-j|+1)^{-1}$; 1D Hubbard at half-filling;  a 6-qubit MAXCUT 
\cite{Farhi_MIT_2014,Otterbach_arxiv_2017,Moll_QST_2018} instance, and a minimal basis 2-qubit dihydrogen molecular Hamiltonian~\cite{OMalley2015}.
To assess the feasibility of implementation on near-term quantum devices,
we have 
carried out noisy
classical emulation (sampling expectation values and with an error model) using the Rigetti quantum virtual machine (QVM) and a
physical simulation using the Rigetti Aspen-1 QPUs, for a single-qubit field model $(\hat{X}+\hat{Z})/\sqrt{2}$ \cite{lamm2018simulation} and a
1D AFM transverse-field Ising model. \REVISION{We also carried out measurement resource estimates for QITE on the short-range 1D Heisenberg (with field)
model studied in Ref.~\cite{kandala_2017_nature} with VQE, and the 1D AFM transverse-field Ising model; we  compared with resource estimates using the publicly available VQE implementation in IBM's Qiskit.}
We carried out QITE using different fixed domain sizes $D$ for the unitary or fermionic unitary (see SI for descriptions of simulations and models).

Fig.~\ref{fig:2}a-\ref{fig:2}f and \ref{fig:3} show the energy obtained by QITE as a function of $\beta$ and $D$ for the various models. 
As we increase $D$, the asymptotic ($\beta \to \infty$) energies rapidly converge to the exact ground-state. For
small $D$, the inexact QITE tracks the exact QITE for a time until the 
correlation length exceeds $D$. Afterwards, it may go down or up. The non-monotonic behavior is strongest
for small domains; in the MAXCUT example, the smallest domain $D=2$ gives an oscillating energy; the first point
at which the energy stops decreasing is a reasonable estimate of the ground-state energy. 
In all models, increasing $D$ past a maximum value (less than $N$) no longer 
affects the asymptotic energy, showing that the correlations have saturated (this is true even in the MAXCUT instance).
\REVISION{
Figs.~\ref{fig:2}g,~\ref{fig:2}h show an estimate from classical emulation of the number of Pauli string expectation values
to be measured in the QITE algorithm as well as the hardware-efficient VQE ansatz (using the optimization protocol
 in Ref.~\cite{kandala_2017_nature}) to
 obtain an energy accuracy of 1\% in the 1D Heisenberg model with field $J=B=1$ (Fig.~\ref{fig:2}g) and 1\% or 2\% in the 1D AFM transverse-field
 Ising model (Fig.~\ref{fig:2}h, the looser threshold was chosen to enable convergence of VQE). QITE is competitive with VQE for the 4-site model and requires significantly fewer measurements
in the 6-site model. While the number of measurements could potentially be reduced in VQE by different optimizers and Ans\"{a}tze, the data suggests that QITE is a promising alternative to VQE on near-term devices.
}

Figs.~\ref{fig:2}e and \ref{fig:2}f show the results of running the QITE algorithm on Rigetti's QVM and Aspen-1 QPUs for 1- and 2- qubits, respectively.
The error bars are due to gate, readout, incoherent and cross-talk errors. Sufficient samples were used to ensure that sampling error is negligible.
Encouragingly for near-term simulations, despite these errors it is possible to converge to a ground-state energy close to the exact energy for the 1-qubit case. 
This result reflects a robustness that is sometimes informally observed in imaginary time evolution algorithms in which the ground state energy is approached even if the imaginary time step is not perfectly implemented. In the 2-qubit case, although the QITE energy converges,
there is a systematic shift which is reproduced on the QVM using available noise parameters for readout, decoherence and depolarizing noise~\cite{Rigetti}. Remaining discrepancies between the emulator and hardware are likely attributable to cross-talk between parallel gates not included in the noise model (see SI). However, reducing decoherence and depolarizing errors in the QVM or using different sets of
qubits with improved noise characteristics (see SI) all lead to improved convergence to the exact ground-state energy.

\begin{figure}[t!]
\includegraphics[width=\columnwidth]{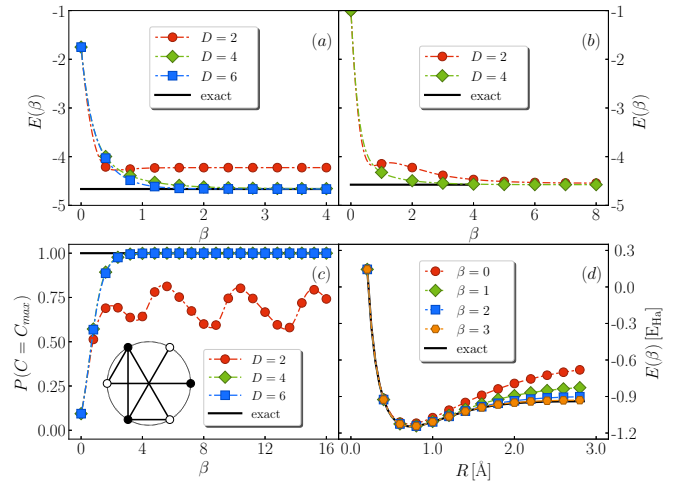} 
\vspace{-0.5cm}
\caption{\SECONDREVISION{{\bf{Application of QITE to long-range spin and fermionic models, and a combinatorial optimization problem.}}} (a) QITE energy as a function of $\beta$ for 
a 6-site 1D long-range Heisenberg model, for unitary domains $D=2-6$;
(b) a 4-site 1D Hubbard model with $U/t = 1$, for unitary domains $D=2,4$. (c) Probability of MAXCUT detection, $P(C=C_{max})$ as a function of imaginary time $\beta$, for the
$6$-site graph in the panel.  (d) QITE energy for
the H$_2$ molecule in the STO-6G basis as a function of bond-length $R$ and $\beta$.
Black line is the exact ground-state energy/probability of detection.
}
\label{fig:3}
\end{figure}

\begin{figure*}[t!]
\includegraphics[width=0.9\textwidth]{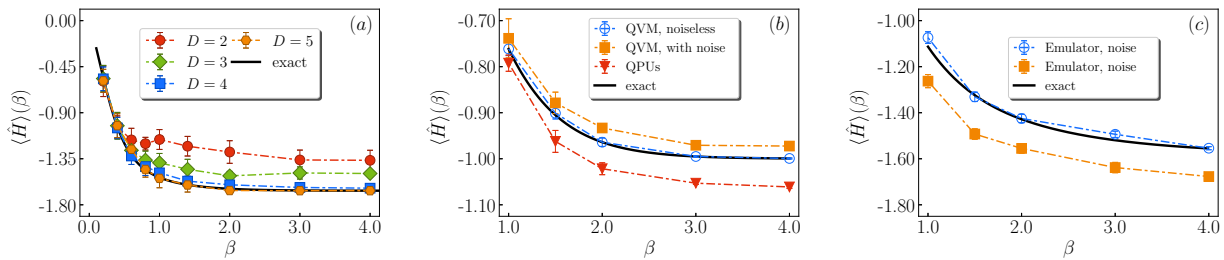}
\vspace{-0.25cm}
\caption{\SECONDREVISION{{\bf{Classical simulation and experimental implementation of the QMETTS algorithm.}}}
Left: Thermal (Gibbs) average $\langle \hat{H} \rangle$ at temperature $\beta$ from QMETTS for a 1D 6-site 
Heisenberg model (exact emulation).
Black line is the exact thermal average without sampling error.
Middle, Right: Thermal average $\langle \hat{H} \rangle$ at temperature $\beta$ from QMETTS for (b) a 1 qubit field model using 
QVMs and QPUs, and (c) 2 qubit AFM transverse field Ising model using QVM. \SECONDREVISION{Error bars represent (block) standard
  deviations computed from multiple samples/runs.}
}
\label{fig:4}
\end{figure*}

\noindent {\bf Quantum Lanczos algorithm}. Given the QITE subroutine, we now consider
how to formulate a quantum Lanczos algorithm, which is an especially
economical realization of a quantum subspace method~\cite{Jarrod1,Jarrod2}. An important practical motivation is that the Lanczos algorithm
typically converges much more quickly than imaginary time evolution, and often in physical simulations only tens of iterations
are needed to converge to good precision. In addition, Lanczos provides a natural way to compute excited states.
Consider the sequence of imaginary time vectors
$|\Phi_l\rangle = e^{-l \ts \oper{H}}|\Phi\rangle$, $l=0, 1, \ldots n$, where $c_l=\|\Phi_l\|$.
In QLanczos, we consider the vectors after even numbers of time steps $|\Phi_0\rangle, |\Phi_2\rangle \ldots$ to form a basis 
for the ground-state. (SI describes the equivalent treatment
in terms of normalized imaginary time vectors). These vectors define an overlap matrix whose elements can be computed entirely from norms, $S_{ll'} 
= \langle \Phi_l |\Phi_{l'}\rangle = c^2_{(l+l')/2}$, where $c_{(l+l')/2}$
is the norm of another integer time step vector, and the overlap matrix elements for $n/2$ vectors can be accumulated 
for free after $n$ steps of time evolution. 
The Hamiltonian matrix elements satisfy the identity $H_{ll'} = \langle \Phi_l | \oper{H} | \Phi_{l'}\rangle 
= \langle \Phi_{(l+l')/2} | \oper{H} | \Phi_{(l+l')/2}\rangle$.
 Although the Hamiltonian has $\sim n^2$ matrix elements \REVISION{in the basis of the $\Phi_l$ states}, 
 there are only $\sim n$ unique elements, and importantly, 
 each is a simple expectation value of the energy during the imaginary time evolution.  
This economy of matrix elements is a property shared with the classical Lanczos algorithm. Whereas the 
classical Lanczos iteration builds a Krylov space in powers of $\oper{H}$, QLanczos builds a Krylov space in powers of 
$e^{-2 \ts \oper{H}}$; in the limit of small $\ts$ these Krylov spaces are identical. Diagonalization of the QLanczos Hamiltonian
matrix is guaranteed to give a ground-state energy lower than that of the last imaginary time vector $\Phi_n$ (while
higher roots approximate excited states).

With a limited computational budget, we can use inexact QITE to generate $\Phi_l$, $\Phi_l'$. However, in this
case the above expressions for $S_{ll'}$ and $H_{ll'}$ in terms of expectation values no longer exactly hold,
which can create numerical issues (e.g. the overlap may no longer be positive). 
To handle this, as well as errors due to noise and sampling in real experiments, the QLanczos algorithm
needs to be stabilized by ensuring that successive vectors are not nearly linearly dependent (see SI).

We demonstrate the QLanczos algorithm using classical emulation on the 1D Heisenberg Hamiltonian,
as used for the QITE algorithm in Fig. \ref{fig:2} (see SI). 
Using exact QITE (large domains) to generate  matrix elements, exact quantum Lanczos converges much 
more rapidly than imaginary time evolution. 
\REVISION{Convergence of inexact QITE (small domains), however, can both be faster and reach lower energies than inexact quantum Lanczos.}
We also assess the feasibility of QLanczos in  presence of noise, using emulated noise on the Rigetti QVM as well as
on the Rigetti Aspen-1 QPUs.
In Fig. \ref{fig:2}, we see
that QLanczos also provides more rapid convergence than QITE with both noisy classical emulation as well as on the physical device
for 1 and 2 qubits.

\noindent {\bf Quantum thermal averages}. The QITE subroutine can be used in a range of other algorithms. For
example, we discuss how to compute thermal averages $\mathrm{Tr}\big[ \oper{O}  e^{-\beta \oper{H}} \big]
/ \mathrm{Tr} \big[ e^{-\beta \oper{H}} \big]$ using imaginary time evolution.
Several procedures have been proposed for quantum thermal averaging, ranging from generating the 
finite-temperature state explicitly by equilibration with a bath~\cite{Terhal_PRA_2000}, to a quantum analog of
Metropolis sampling~\cite{Temme_Nature_2011} that relies on phase estimation, as well as methods based
on ancilla based Hamiltonian simulation with post-selection~\cite{Somma} and approaches based on recovery maps \cite{brandao2019finite}. However, given a method for imaginary time evolution, one can generate thermal averages 
of observables without any ancillae or deep circuits. This can be done by adapting to the quantum setting the classical minimally entangled typical thermal 
state (METTS) algorithm \cite{White_PRL_2009,Miles_NJP_2010}, which generates a Markov chain from which the thermal average can be sampled.
The QMETTS algorithm can be carried out as follows (i) start from a product state, carry out imaginary-time 
evolution (using QITE) up to time $\beta$ (ii) measure the expectation value of $\hat{O}$ to produce its thermal average (iii) measure a product operator such as $\oper{Z}_1 \oper{Z}_2 \ldots \oper{Z}_N$, 
to collapse back onto a random product state (iv) repeat (i). Note that in step (iii) one can measure in any product basis, and randomizing 
the product basis can be used to reduce the autocorrelation time and avoid ergodicity problems in sampling.
In Fig. \ref{fig:4} we show the results of  quantum METTS (using exact classical emulation) for the thermal average $\langle \oper{H}\rangle$
as a function of temperature $\beta$,
for the 6-site Heisenberg model for several temperatures 
and domain sizes; sufficiently
large $D$ converges to the exact thermal average at each $\beta$; error bars reflect only finite QMETTS samples.
We also show an implementation of quantum METTS
on the Aspen-1 QPU and QVM with a 1-qubit field model (Fig.~\ref{fig:4}b),
and using the QVM for a 2-qubit AFM transverse field Ising model (Fig.~\ref{fig:4}c).

\SECONDREVISION{
\noindent {\bf Conclusions} 
In summary, the quantum analogs of imaginary-time evolution,
Lanczos and METTS algorithms we have presented enable a new class of eigenstate and thermal state quantum simulations, that can be carried out without ancillae or deep circuits and that, for bounded correlation length, achieve exponential reductions in space and time per iteration relative to known classical counterparts.
Encouragingly, these algorithms appear useful in conjunction with near-term quantum architectures, and serve to demonstrate the power of quantum elevations of classical simulation techniques, in the continuing search for quantum supremacy.
}

{\bf{Acknowledgments.}} MM, GKC, FGSLB, ATKT, AJM were supported by the US NSF via RAISE-TAQS CCF 1839204. 
         MJO'R was supported by an NSF graduate fellowship via grant No. DEG-1745301; 
         the tensor network algorithms were developed with the
         support of the US DOD via MURI FA9550-18-1-0095. EY was supported by a Google fellowship.
         CS was supported by the US DOE via DE-SC0019374.
         GKC is a Simons Investigator in Physics and a member of the Simons Collaboration on the Many-Electron Problem.
         The Rigetti computations were made possible by a generous grant through
         Rigetti Quantum Cloud services supported by the CQIA-Rigetti Partnership Program. We thank GH Low, JR McClean, R Babbush for discussions, and the Rigetti team for help with the QVM and QPU simulations.
         
{\bf{Author contributions and data availability.}}

MM, CS, GKC designed the algorithms. FGSLB established the mathematical proofs and error estimates.
EY and MJO'R performed classical tensor network simulations. MM, CS, ATKT carried out classical exact emulations. ATKT and AJM designed and
carried out the Rigetti QVM and QPU experiments. All authors contributed to the discussion of results and writing of the manuscript.
The code used to generate the data presented in this study can be publicly accessed on GitHub at $\mathsf{https://github.com/mariomotta/QITE.git}$

\appendix

\section{SUPPLEMENTAL INFORMATION}

\subsection{Representing imaginary-time evolution by unitary maps}

\SECONDREVISION{
In this section, we discuss how to emulate imaginary time evolution by measurement-assisted unitary circuits acting on suitable domains.
}
As discussed in the main text, we  map the scaled non-unitary action of $e^{-\Delta\tau \operSB{h}{l}}$
  on a state $\Psi$ 
to that of a unitary $e^{-i\Delta\tau \operSB{A}{l} }$, i.e.
\begin{align}
| \REVISION{\bar{\Psi}}^\prime \rangle \equiv c^{-1/2} \, e^{-\Delta \tau \operSB{h}{l} } |\Psi\rangle = e^{-i\Delta\tau \operSB{A}{l}} |\Psi\rangle \quad .
\end{align}
where $c = \langle \Psi | e^{-2 \Delta \tau \operSB{h}{l} } |\Psi\rangle$.
$\operSB{h}{l}$ acts on $k$ qubits; 
$\hat{A}$ is Hermitian and acts on a domain of $D$ qubits around the support of $\operSB{h}{l}$,
and is expanded as a sum of Pauli strings acting on the $D$ qubits,
\begin{equation}
\operSB{A}{l} 
= \sum_{i_1i_2 \ldots i_D} a[l]_{i_1i_2 \ldots i_D} \hat{\sigma}_{i_1}\hat{\sigma}_{i_2} \ldots \hat{\sigma}_{i_D} 
= \sum_I a[l]_I \hat{\sigma}_I 
\;,
\label{eq:pauli}
\end{equation}
where $I$ denotes the index $i_1i_2 \ldots i_D$. Define
\begin{equation}
|\Delta_0\rangle = \frac{| \REVISION{\bar{\Psi}}^\prime \rangle - | \Psi\rangle}{\Delta \tau}
\;,\;
|\Delta\rangle = -i \operSB{A}{l} |\Psi\rangle
\;.
\end{equation}
Our goal is to minimize the difference $||\Delta_0 - \Delta||$. If the unitary $e^{-i\Delta\tau \operSB{A}{l} }$ is defined over a 
sufficiently large domain $D$ (related to the correlation length of $|\Psi\rangle$, 
see Section~\ref{sec:spread}) then this error minimizes at $\sim 0$, for small $\Delta \tau$. Minimizing for real $a[l]$
corresponds to minimizing the quadratic function $f(a[l])$
\begin{align}
f(a[l]) = f_0 + \sum_I b_I a[l]_I + \sum_{IJ} a[l]_I S_{IJ} a[l]_J
\end{align}
where
\begin{align}
f_0 &= \langle \Delta_0 | \Delta_0 \rangle \quad , \\
S_{IJ} &= \langle \Psi | \hat{\sigma}^\dag_I \hat{\sigma}_J | \Psi\rangle \quad ,\\
b_I &= i \, \langle \Psi | \hat{\sigma}^\dag_I | \Delta_0 \rangle -  i \, \langle  \Delta_0 | \hat{\sigma}_I | \Psi \rangle \quad ,
\end{align}
whose minimum obtains at the solution of the linear equation
\begin{align}
\left( \mathbf{S}+\mathbf{S}^T \right) \mathbf{a}[l] = -\mathbf{b} \label{eq:lineareq}
\end{align}
In general, $\mathbf{S}+\mathbf{S}^T$ may have a non-zero null-space. Thus, we solve Eq.~\eqref{eq:lineareq}
either by applying the generalized inverse of $\mathbf{S}+\mathbf{S}^T$ or by an iterative algorithm such as conjugate gradient.

\SECONDREVISION{Note that the above results for geometric local
  Hamiltonians can be extended to Hamiltonians with long-range terms.
  The primary difference is a modification of the domain over
  which the unitary acts.
  For example, for a spin Hamiltonian with long-range pairwise terms, the scaled action of $e^{-\Delta \tau \hat{h}[l]}$ (if $\hat{h}[l]$ acts on qubits $i$ and $j$) can be emulated, to within accuracy $\varepsilon$, by a unitary constructed in the neighborhoods of $i$ and $j$, with the domain
  size given by the result in Eq. (\ref{eq:dbound}), see also discussion below).}
\REVISION{
  For fermionic Hamiltonians, we replace the Pauli operators in Eq.~(\ref{eq:pauli}) by fermionic field operators, i.e. $\sigma \in \{1, f, f^\dag, f^\dag f\}$, and conjecture that the analogous result holds for the domains
  for the fermionic operators as for spin operators.
  For a number conserving Hamiltonian, such as the fermionic Hubbard Hamiltonian treated in Fig. 3 in the main text, 
  we retain only those terms with an equal number of creation and annihilation operators, to conserve particle number.
}
 
\REVISION{
\subsection{Real Hamiltonians and states}
As the cost
to construct the quantities ${\bf{S}}$, 
${\bf{b}}$ and to solve the linear system \eqref{eq:lineareq} increases
exponentially with $D$, it is natural to seek out simplifications that can be made
rigorously. For example, as described above, to emulate the non-unitary action of a number-conserving fermionic Hamiltonian,
we can consider $\hat{A}[l]$ to contain only fermionic operator strings that also conserve particle number. In the main text,
we also considered the case where $\hat{A}[l]$ is itself approximately $p$-local, which removes the exponential dependence on $D$.

Another common scenario concerns Hamiltonians and states that only
have real matrix elements and coefficients in the $Z$ computational basis.
Then, since $\langle \Psi|$ and $|\Delta_0\rangle$ are real in the computational basis and
\begin{equation}
\begin{split}
b_I 
&= -2 \, \mbox{Im} \left[ \langle \Psi | \hat{\sigma}_I^\dagger | \Delta_0 \rangle \right],
\end{split}
\end{equation}
$b_I \equiv 0$ unless the matrix elements of $\hat{\sigma}_I^\dagger$ have 
 non-zero imaginary part. When the Pauli basis is used, this means that $b_I \equiv 0$
unless $\hat{\sigma}_I^\dagger$ contains an odd number of $\hat{Y}$ operators.

The number of such operators for a domain size $D$, $y(D)$, is $2^D \frac{2^D-1}{2} \simeq \frac{4^D}{2}$.
This can be shown by induction over $D$. For $D=1$, one has 
\begin{equation}
y(D=1)=1
\end{equation}
Pauli strings with an odd number of $\hat{Y}$'s (i.e. just $\hat{Y}$).
For $D=2$, one has 
\begin{equation}
y(D=2)=6
\end{equation}
such strings, namely
\begin{equation}
\hat{I} \hat{Y}, \hat{X} \hat{Y}, \hat{Z} \hat{Y}, \hat{Y} \hat{I},
\hat{Y} \hat{X}, \hat{Y} \hat{Z} \quad .
\end{equation}
For $D=3$, the number grows to 
\begin{equation}
y(D=3)=28
\end{equation}
with strings
\begin{equation}
\begin{array}{lll}
\hat{I} \hat{I} \hat{Y}, & \dots, & \hat{I} \hat{Y} \hat{Z} \\
\hat{X} \hat{I} \hat{Y}, & \dots, & \hat{X} \hat{Y} \hat{Z} \\
\hat{Y} \hat{I} \hat{I}, & \dots, & \hat{Y} \hat{Z} \hat{Z} \\
\hat{Z} \hat{I} \hat{Y}, & \dots, & \hat{Z} \hat{Y} \hat{Z} \\
\end{array}
\end{equation}
and so on.
Pauli strings of length $D+1$ containing an odd number of $\hat{Y}$ operators are obtained either by attaching a $\hat{Y}$ operator to a length-$D$ string containing an even number of $\hat{Y}$ operators,
or by attaching a $\hat{I}$, $\hat{X}$, $\hat{Z}$ operator to a length-$D$ string containing 
an odd number of $\hat{Y}$ operators. Therefore,  $O(D)$  obeys the recursion
relation 
\begin{equation}
y(D+1) = 3 y(D) + ( 4^D - y(D) ) \quad.
\end{equation}
This recursion relation is solved, with the initial condition
$y(D=1)=1$, by 
\begin{equation}
y(D) = 2^D \frac{2^D-1}{2} \quad.
\end{equation}

This result means that both $\mathbf{b}$ and $\mathbf{S}+{\mathbf{S}}^T$ can be assembled from $y(D)$ Pauli string expectation values, roughly
half the number of measurements needed if one did not assume real Hamiltonians and states. Further, the dimension
of $\mathbf{b}$ and $\mathbf{S}+{\mathbf{S}}^T$ is $y(D)$ and $y(D) \times y(D)$ respectively. Asymptotically,
this reduces the cost of solving the linear system  Eq. \eqref{eq:lineareq} by a factor of $1/8$, assuming dense matrix techniques.
}
\subsection{Rigorous run time bounds}
\label{sec:spread}

Here we present a more detailed analysis of the running time of the algorithm. Consider a $k$-local Hamiltonian 
\begin{equation}
H = \sum_{l=1}^m \operSB{h}{l}
\;,
\end{equation}
acting on a $d$-dimensional lattice with $\Vert h_i \Vert \leq 1$, where $\Vert * \Vert$ is the operator norm. 
\REVISION{
  Note that, if a quantum chemistry system is studied with an orthonormal basis of spatially localized states,
  such states can be approximately positioned on a lattice, and the results
 of this section apply on length scales larger than the size of the employed basis
functions.}
In imaginary-time evolution one typically applies Trotter formulae to approximate 
\begin{equation}
\frac{e^{- \beta \oper{H} } | \Psi_0 \rangle}{ \Vert   e^{- \beta \oper{H} } | \Psi_0 \rangle \Vert   }
\simeq
\frac{  \left (  e^{- \dt \operSB{h}{1} } \ldots  e^{- \dt \operSB{h}{m} }  \right)^{n} | \Psi_0 \rangle   }   { \Vert  \left (  e^{- \dt \operSB{h}{1} } \ldots  e^{- \dt \operSB{h}{m} } \right)^{n} | \Psi_0 \rangle   \Vert  }
\;.
\label{trotterdecomp}
\end{equation}
for an initial state $| \Psi_0 \rangle$ (which we assume to be a product state).
This approximation leads to an error which can be made as small as one wishes by increasing the number of time steps $n$.
Let $| \Psi_s \rangle$ be the state (after renormalization) obtained by applying $s$ terms $e^{- \dt \operSB{h}{i} }$ from $\big( e^{- \dt \operSB{h}{1} } \ldots  e^{- \dt \operSB{h}{m} } \big)^{n}$; 
with this notation $| \Psi_{mn} \rangle$ is the state given by Eq. (\ref{trotterdecomp}). In the QITE algorithm, instead of applying each of the operators $e^{- \dt \operSB{h}{i} }$ to $| \Psi_0 \rangle$ 
(and renormalizing the state), one applies local unitaries $\oper{U}_s$ which should approximate the action of the original operator. Let $| \Phi_s \rangle$ be the state after $s$ unitaries have been applied. 

Let $C$ be an upper bound on the correlation length of $| \Psi_s \rangle$ for every $s$: we assume that for every $s$, and every pair of observables 
$\oper{A}$ and $\oper{B}$ acting on domains separated by $\text{dist}(A,B)$ sites, 
\begin{equation} \label{correlationdecay}
\begin{split}
C_s(\oper{A},\oper{B}) &= \langle \Psi_s | \oper{A} \otimes \oper{B} | \Psi_s \rangle - \langle \Psi_s | \oper{A}  | \Psi_s \rangle \langle \Psi_s | \oper{B} | \Psi_s \rangle \\
&\leq \Vert \oper{A} \Vert \Vert \oper{B} \Vert e^{- \text{dist}(A, B) / C}.
\end{split}
\end{equation}

\begin{theorem}  \label{bounderrors}
For every $\varepsilon > 0$, there are unitaries $\hat{U}_s$ each acting on 
\begin{equation}
N_q = k \, (2 C)^d \, \ln^d\left( 2 \sqrt{2} \, n m \, \varepsilon^{-1} \right)
\end{equation}
qubits, such that
\begin{equation}
\left \Vert   | \Psi_{mn} \rangle  -  | \Phi_{mn} \rangle  \right \Vert \leq \varepsilon  
\;.
\end{equation}
\end{theorem}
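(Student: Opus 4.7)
The plan is to combine three ingredients: (i) the correlation-decay hypothesis \eqref{correlationdecay} on the intermediate Trotter states $|\Psi_s\rangle$, (ii) Uhlmann's theorem to convert closeness of reduced density matrices into closeness under a local unitary, and (iii) a telescoping triangle-inequality bound on the accumulated error across the $mn$ Trotter steps. The strategy is to not try to reproduce each non-unitary step exactly; instead, for each $s \to s+1$ I would exhibit a unitary $\hat{U}_s$ supported on a domain of radius $L$ around the support of $\hat{h}[l]$ such that $\|\hat{U}_s |\Psi_s\rangle - |\Psi_{s+1}\rangle\| \leq \delta$, then conclude, using that unitaries are isometries, that $\| |\Phi_{mn}\rangle - |\Psi_{mn}\rangle \| \leq mn\,\delta$. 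Setting $\delta = \varepsilon/(mn)$ and solving for $L$ will fix the domain size.

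For the per-step construction, let $A$ be the set of qubits within graph-distance $L$ of the $k$-qubit support of $\hat{h}[l]$, and let $B$ be the complement. In $d$ dimensions $|A| \leq k (2L)^d$. Define the reduced states $\rho_B^{s} = \mathrm{Tr}_A |\Psi_s\rangle\langle\Psi_s|$ and $\rho_B^{s+1} = \mathrm{Tr}_A |\Psi_{s+1}\rangle\langle\Psi_{s+1}|$. For any observable $\hat{O}_B$ on $B$ with $\|\hat{O}_B\|\leq 1$,
\begin{equation}
\mathrm{Tr}\bigl[\hat{O}_B (\rho_B^{s+1}-\rho_B^{s})\bigr]
= \frac{C_s\bigl(\hat{O}_B,\, e^{-2\dt \hat{h}[l]}\bigr)}{\langle \Psi_s | e^{-2\dt \hat{h}[l]} | \Psi_s\rangle}.
\end{equation}
Using $\|h[l]\|\leq 1$, the denominator is $1+O(\dt)$ and $\|e^{-2\dt\hat{h}[l]}\|\leq e^{2\dt}$, so hypothesis \eqref{correlationdecay} with distance $L$ yields $\|\rho_B^{s+1}-\rho_B^{s}\|_1 \leq 2 e^{4\dt}\, e^{-L/C}$. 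Since $|\Psi_s\rangle$ and $|\Psi_{s+1}\rangle$ are purifications of $\rho_B^s$ and $\rho_B^{s+1}$ on the same ancilla system $A$, Uhlmann's theorem supplies a unitary $\hat{U}_s$ acting on $A$ with $|\langle \Psi_{s+1}|\hat{U}_s|\Psi_s\rangle|^2 \geq F(\rho_B^s,\rho_B^{s+1})$. Using Fuchs--van de Graaf, $F(\rho_B^s,\rho_B^{s+1}) \geq 1 - \tfrac{1}{2}\|\rho_B^s-\rho_B^{s+1}\|_1$, and the standard identity $\|\hat{U}_s|\Psi_s\rangle-|\Psi_{s+1}\rangle\|^2 = 2-2\mathrm{Re}\langle \Psi_{s+1}|\hat{U}_s|\Psi_s\rangle$, one obtains the per-step bound $\delta \leq \sqrt{2}\,e^{-L/(2C)}$ (absorbing the harmless $O(\dt)$ renormalization factor).

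Setting $\sqrt{2}\, mn\, e^{-L/(2C)} \leq \varepsilon$ gives $L \geq 2 C \ln(\sqrt{2}\, mn\,\varepsilon^{-1})$, and the domain size is therefore bounded by $k(2L)^d = k(2C)^d \ln^d(2\sqrt{2}\,mn\,\varepsilon^{-1})$ after absorbing an additional factor of $2$ into the logarithm to match the form stated in the theorem. The telescoping
\begin{equation}
\bigl\| |\Phi_{mn}\rangle - |\Psi_{mn}\rangle \bigr\|
\leq \sum_{s=1}^{mn} \bigl\| \hat{U}_s |\Psi_{s-1}\rangle - |\Psi_s\rangle \bigr\|
\leq mn\,\delta \leq \varepsilon
\end{equation}
completes the argument. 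The main obstacle, in my view, is not any single inequality but the tight and uniform control of the clustering constant $C$ across all intermediate renormalized Trotter states: the estimate $\|\rho_B^{s+1}-\rho_B^{s}\|_1 \leq 2\,e^{-L/C}$ holds only as strongly as the assumed uniform correlation-decay bound \eqref{correlationdecay} on every $|\Psi_s\rangle$, which already encodes the physics of a thermalizing finite-correlation-length system. A secondary subtlety is that after the $\hat{U}_s$ approximation, subsequent steps evolve the approximate state $|\Phi_s\rangle$ rather than $|\Psi_s\rangle$; one must verify that the accumulated deviation $\| |\Phi_s\rangle - |\Psi_s\rangle \|$ does not corrupt the clustering property used at step $s+1$, which is handled cleanly because the triangle-inequality bound on that deviation remains $\leq s\,\delta \ll 1$ whenever $\varepsilon$ is small, so the clustering hypothesis on $|\Psi_s\rangle$ transfers to $|\Phi_s\rangle$ up to perturbations absorbed into the stated error budget.
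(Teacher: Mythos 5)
Your proof is correct and follows essentially the same route as the paper's: a telescoping triangle inequality over the $mn$ Trotter steps (using unitary invariance of the norm), the correlation-decay hypothesis \eqref{correlationdecay} to show that the marginals of $|\Psi_{s-1}\rangle$ and $|\Psi_s\rangle$ outside a radius-$L$ neighborhood of $\mathrm{supp}(\hat{h}[l])$ are $O(e^{-L/C})$-close in trace norm, and Uhlmann's theorem to convert this into a unitary $\hat{U}_s$ acting on that neighborhood, with constants matching the stated $N_q$ up to the same bookkeeping slack the paper itself allows. The only differences are presentational: where the paper cites Lemma 9 and Lemmas 11--12 of \cite{brandao2015exponential}, you re-derive their content inline (the marginal bound via the covariance identity for $\mathrm{Tr}[\hat{O}_B(\rho_B^{s+1}-\rho_B^{s})]$, and the Uhlmann step via Fuchs--van de Graaf), and your closing worry about clustering transferring to $|\Phi_s\rangle$ is moot for this theorem, since the $\hat{U}_s$ are constructed from the exact states $|\Psi_{s-1}\rangle, |\Psi_s\rangle$ and the telescoping bound requires nothing more.
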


\begin{proof}
We have
\begin{eqnarray} \label{boundingerror1}
&\left \Vert  | \Psi_{s} \rangle  -| \Phi_{s} \rangle   \right  \Vert =
\left \Vert  | \Psi_{s} \rangle   - \hat{U}_s | \Phi_{s-1} \rangle   \right  \Vert \nonumber \\
\leq &\left \Vert   | \Psi_{s} \rangle   - \hat{U}_s | \Psi_{s-1} \rangle   \right  \Vert    +  \left \Vert | \Psi_{s-1} \rangle  - | \Phi_{s-1} \rangle \right \Vert.    
\end{eqnarray}
To bound the first term we use our assumption that the correlation length of $| \Psi_{s-1} \rangle$ is smaller than $C$. Consider a region $R_{v}$ of all sites that are a distance at most $v$ (in the Manhattan distance on the lattice) of the sites in which $h_{i_s}$ acts. Let $\text{tr}_{\backslash R_v}(| \Psi_s \rangle \langle \Psi_s | )$ be the reduced state on $R_v$, obtained by partial tracing over the complement of $R_v$ in the lattice. Since
\begin{equation}
 | \Psi_{s} \rangle = \frac{ e^{- \dt \operSB{h}{i_s} }  | \Psi_{s-1} \rangle }{ \Vert e^{ - \dt \operSB{h}{i_s} } | \Psi_{s-1} \rangle \Vert },
\end{equation}
it follows from Eq. (\ref{correlationdecay}) and Lemma 9 of \cite{brandao2015exponential} that 
\begin{equation} \label{boundmarginal}
\begin{split}
&\left \Vert \text{tr}_{\backslash R_v}(| \Psi_s \rangle \langle \Psi_s | ) -  \text{tr}_{\backslash R_v}(| \Psi_{s-1} \rangle \langle \Psi_{s-1} | ) \right \Vert_1 \\
&\leq \Vert e^{ \dt \operSB{h}{i_s} } \Vert^{-1} e^{- \frac{v}{C}} \leq 2 e^{- \frac{v}{C}},
\end{split}
\end{equation}
where we used that for $n \geq 2\beta$, 
\begin{equation}
\Vert e^{- \dt \operSB{h}{i_s} } \Vert \geq \Vert I - \dt \operSB{h}{i_s} \Vert \geq 1 - \dt \geq 1/2
\;.
\end{equation}
Above, $\Vert * \Vert_1$ is the trace norm.
The key result in our analysis is Uhlmann's theorem (see e.g. Lemmas 11 and 12 of \cite{brandao2015exponential}). 
It states that two pure states with nearby marginals must be related by a unitary on the purifying system. 
\REVISION{
In more detail, if $| \eta \rangle_{AB}$ and $| \nu \rangle_{AB}$ are two states with partial traces
$| \eta \rangle_A$ and $| \nu \rangle_A$ over the complement of $A$, such that
$\Vert | \eta \rangle_A - | \nu \rangle_A \Vert_1 \leq \delta$,
then there exists a unitary $\oper{V}$ acting on $B$ such that} 
\begin{equation} 
\label{uhlmannstatement}
\Vert | \eta \rangle_{AB} - (I \otimes \oper{V} ) | \nu \rangle_{AB}    \Vert \leq 2 \sqrt{\delta}.
\end{equation}
Applying Uhlmann's theorem to $| \Psi_s \rangle$ and $| \Psi_{s-1} \rangle$, with $B = R_v$, and using Eq. (\ref{boundmarginal}), we find that there exists a unitary $\oper{U}_s$ acting on $R_{v}$ s.t. 
\begin{equation} 
\label{uhlmannbound}
 \left \Vert  | \Psi_{s} \rangle   - \oper{U}_s | \Psi_{s-1} \rangle   \right  \Vert \leq 2 \sqrt{2} \, e^{- \frac{v}{2C}},
\end{equation}
which by Eq. (\ref{boundingerror1}) implies 
\begin{equation}
 \left \Vert  | \Psi_{nm} \rangle  - | \Phi_{nm} \rangle   \right  \Vert \leq 2 \sqrt{2} \, m n \, e^{- \frac{v}{2C}},
\end{equation}

Choosing $\nu = 2 C \ln(2 \sqrt{2} n m \varepsilon^{-1})$ as the width of the support of the approximating unitaries, the error term above is $\varepsilon$. 
The support of the local unitaries is $k \nu^d$ qubits (as this is an upper bound on the number of qubits in $R_\nu$). Therefore each unitary $\oper{U}_s$ 
acts on at most
\begin{equation}
N_q = k \, (2 C)^d \ln^d\left( 2 \sqrt{2} \, n m \, \varepsilon^{-1} \right) \label{eq:dbound}
\end{equation}
qubits. 
\end{proof}

\vspace{0.4 cm}

\noindent \textit{Finding $\oper{U}_s$:} In the algorithm we claim that we can find the unitaries $\oper{U}_s$ by solving a least-square problem. 
This is indeed the case if we can write them as $\oper{U}_s = e^{i \frac{\operSB{A}{s}}{n} }$ with $\operSB{A}{s}$ a Hamiltonian of constant norm. 
Then for sufficiently large $n$, $\oper{U}_s = I + i \frac{\operSB{A}{s}}{n}  + O\left( \frac{1}{n^2} \right)$ and we can find $\operSB{A}{s}$ by 
performing tomography of the reduced state over the region where $\oper{U}_s$ acts and solving the linear problem given in the main text. 
Because we apply Uhlmann's Theorem to 
\begin{equation}
| \Psi_{s-1} \rangle
\;\;
\mbox{and}
\;\;
\frac{ e^{- \dt \operSB{h}{i_s} }  | \Psi_{s-1} \rangle }{ \Vert e^{ - \dt \operSB{h}{i_s} } | \Psi_{s-1} \rangle \Vert }
\;\;,
\end{equation}
using $e^{ - \dt \operSB{h}{i_s} } = I - \dt \operSB{h}{i_s} + O\left( \frac{1}{n^2} \right)$ and following the proof of the Uhlmann's Theorem, 
we find that the unitary can indeed be taken to be close to the identity, i.e. $\oper{U}_s$ can be written as $e^{i \frac{\operSB{A}{s}}{n} }$.

\vspace{0.4 cm}

\noindent \textit{Total Run Time:} Theorem \ref{bounderrors} gives an upper bound on the maximum support of the unitaries needed for a Trotter update, 
while tomography of local reduced density matrices gives a way to find the unitaries. 
The cost for tomography is quadratic in the dimension of the region, so it scales as $\exp(O( N_q ))$. 
This is also the cost to solve classically the linear system which gives the associated Hamiltonian $\operSB{A}{s}$ and of finding a circuit decomposition 
of $\oper{U}_s = e^{i \frac{\operSB{A}{s}}{n}}$ in terms of two-qubit gates. 
As this is repeated $mn$ times, for each of the $mn$ terms of the Trotter decomposition, the total running time (of both quantum and classical parts) is
\begin{equation}
T = m n \, e^{O(N_q)} = m n \, e^{O\left( k \, (2 C)^d \ln^d\left( 2 \sqrt{2} \, n m \, \varepsilon^{-1} \right) \right)}
\;.
\end{equation}
This is exponential in $C^d$, with $C$ the correlation length, and quasi-polynomial in $n$ (the number of Trotter steps) and $m$ (the number of local 
terms in the Hamiltonian. Note that typically $m = O(N)$, with $N$ the number of sites). While this an exponential improvement over the $\exp(O(N))$ 
scaling classically, the quasi-polynomial dependence on $m$ can still be prohibitive in practice. Below we show how to improve on that.

\vspace{0.4 cm}

\noindent \textit{Local Approximation:} If one is only interested in a local approximation of the state (meaning that all the local marginals of $|\Phi_{nm} \rangle$ 
are close to the ones of $e^{- \beta \oper{H} } |\Psi_0 \rangle$, but not necessarily the global states), then the support of the unitaries becomes 
independent of the number of terms of the Hamiltonian $m$ (while for global approximation we have a polylogarithmic dependence on $m$):

\begin{theorem} 
\label{bounderrorslocal}
For every $\varepsilon > 0$, there are unitaries $\oper{U}_s$ each acting on

\begin{equation}
N_q = k (2 C)^d \ln^d \left( 2 \sqrt{2} n \left(|S| + C \ln^d\left( 8 n C(2C)^{d+1} \varepsilon^{-1} \right) \right)\right)
\end{equation}
qubits such that, for every connected region $S$ of size at most $|S|$,
\begin{equation} 
\label{localapprox}
\left \Vert \mathrm{tr}_{\backslash S} ( | \Psi_{mn} \rangle \langle \Psi_{mn} |  ) - \mathrm{tr}_{\backslash S} ( | \Phi_{mn} \rangle \langle \Phi_{mn} |  )  \right \Vert_1 \leq \varepsilon
\;.
\end{equation}
\end{theorem}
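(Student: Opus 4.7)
The plan is to reduce the local approximation task on a small region $S$ to the global approximation task of Theorem \ref{bounderrors} applied to a Hamiltonian restricted to a neighborhood of $S$. Fix an enlargement parameter $v'$ and let $R_{v'}$ denote the set of lattice sites within Manhattan distance $v'$ of $S$. Introduce the truncated Hamiltonian $\tilde{H} = \sum_{l : \mathrm{supp}(\operSB{h}{l}) \subset R_{v'}} \operSB{h}{l}$, which contains only $m_{v'} = |S| + O((v')^d)$ terms, independent of the full system size. Let $|\bar\Psi_s\rangle$ denote the state obtained from $|\Psi_0\rangle$ by replacing, in the full Trotter sweep, every factor $e^{-\dt \operSB{h}{l}}$ for $\operSB{h}{l} \not\subset R_{v'}$ by the identity, i.e. by evolving only under $\tilde H$. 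The argument then has two ingredients: a truncation bound comparing $|\Psi_{mn}\rangle$ to $|\bar\Psi_{mn}\rangle$ on the $S$-marginal, and an application of Theorem \ref{bounderrors} to $\tilde H$ to approximate $|\bar\Psi_{mn}\rangle$ by circuits $\oper{U}_s$ of controlled support.

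For the truncation bound, I would mirror the proof of Theorem \ref{bounderrors} but take the ``region of interest'' in Lemma 9 of \cite{brandao2015exponential} to be $S$ itself, applying the correlation decay estimate (\ref{correlationdecay}) to local operators separated by $\mathrm{dist}(S, \operSB{h}{l})$. For each Trotter factor $e^{-\dt \operSB{h}{l}}$ with $\operSB{h}{l}$ sitting outside $R_{v'}$, replacing it by the identity perturbs the marginal on $S$ by at most $2\, e^{-\mathrm{dist}(S, \operSB{h}{l})/C}$. Telescoping over the $n$ Trotter sweeps and the outside-$R_{v'}$ terms, and exploiting the $d$-dimensional geometry to bound the tail sum $\sum_{l \not\subset R_{v'}} e^{-\mathrm{dist}(S, \operSB{h}{l})/C}$ by a quantity of the form $O(C^d e^{-v'/C})$ independent of $m$, the total truncation error satisfies
\begin{equation}
\bigl\| \mathrm{tr}_{\backslash S} |\Psi_{mn}\rangle\langle\Psi_{mn}| - \mathrm{tr}_{\backslash S} |\bar\Psi_{mn}\rangle\langle\bar\Psi_{mn}| \bigr\|_1 = O\!\bigl( n C^d e^{-v'/C} \bigr).
\end{equation}
Choosing $v'$ logarithmic in $n,\, C,\, \varepsilon^{-1}$ (in the precise form appearing in the theorem statement) forces this below $\varepsilon/2$. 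Then, applying Theorem \ref{bounderrors} to $\tilde H$ with accuracy $\varepsilon/2$ yields unitaries acting on $k(2C)^d \ln^d(2\sqrt{2}\, n\, m_{v'}/\varepsilon)$ qubits such that $\|\,|\bar\Psi_{mn}\rangle - |\Phi_{mn}\rangle\,\| \leq \varepsilon/2$; by monotonicity of the trace distance under partial trace (combined with the standard relation between pure-state $L^2$ distance and trace distance), the $S$-marginals inherit the same error. Substituting $m_{v'} = |S| + O((v')^d)$ with the chosen $v'$ and combining via triangle inequality produces the stated form of $N_q$.

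The main obstacle is the first step: one must avoid a naive triangle inequality over all $nm$ Trotter factors, which would reintroduce the factor $m \sim N$ that the local statement is specifically trying to remove. The saving comes entirely from the geometric tail sum of $e^{-\mathrm{dist}(S,\operSB{h}{l})/C}$ over far-away Hamiltonian terms, which in $d$ dimensions converges to an $m$-independent $O(C^d)$ times the boundary factor, so that a merely logarithmic enlargement $v'$ suffices. A minor additional subtlety is verifying that the correlation-decay hypothesis (\ref{correlationdecay}), originally stated for the exact Trotter trajectory $|\Psi_s\rangle$, can also be invoked along the truncated trajectory $|\bar\Psi_s\rangle$ to the extent needed; since $|\bar\Psi_s\rangle$ differs from $|\Psi_s\rangle$ only by the truncation error already being controlled, a straightforward continuity argument handles this. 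Once the tail sum is tamed, the reduction to Theorem \ref{bounderrors} for $\tilde H$ is essentially mechanical, and all explicit dependence on $m$ in the final bound is replaced by $|S|$ plus polylogarithmic corrections.
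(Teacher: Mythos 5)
Your overall route is genuinely different from the paper's. The paper does not truncate the Hamiltonian at all: it keeps the full Trotter sequence and reuses the very same unitaries $\hat{U}_s$ already constructed in the proof of Theorem~\ref{bounderrors} (satisfying Eq.~(\ref{uhlmannbound})), and only refines the error accounting at $S$. Replacements occurring farther than $2C\ln(1/\delta)$ from $S$ contribute to the $S$-marginal only through the exponentially decaying tail, bounded by $n\int_{2C\ln(1/\delta)}^{\infty} e^{-l/2C}\, l^d\, dl \leq 4nC(2C)^{d+1}\delta$, while the polynomially many near replacements each contribute the full Uhlmann error $2\sqrt{2}\,e^{-\nu/2C}$; choosing $\delta$ and $\nu$ then yields $N_q$. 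Your geometric tail sum is the same mechanism and is the right core idea, and up to the paper's own somewhat loose polylog bookkeeping your $m_{v'} = |S| + O((v')^d)$ reproduces the stated form of $N_q$. However, your reduction to Theorem~\ref{bounderrors} for the truncated Hamiltonian $\tilde{H}$ introduces two genuine gaps that the paper's arrangement avoids.

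First, a quantifier mismatch: the theorem asserts a \emph{single} set of unitaries, acting on $N_q$ qubits, that works simultaneously for \emph{every} connected region $S$ of size at most $|S|$ (only the size bound, not the location, enters $N_q$). Your circuit — identity outside $R_{v'}(S)$, Theorem~\ref{bounderrors} unitaries for $\tilde{H}$ inside — depends on where $S$ sits, so you prove only the weaker per-region statement; moreover your $|\Phi_{mn}\rangle$ is no longer the output of the QITE algorithm run on $H$, which is the object the theorem is meant to certify. Second, a hypothesis-transfer problem: Theorem~\ref{bounderrors} requires the correlation-decay bound (\ref{correlationdecay}) along the trajectory to which it is applied, here the truncated trajectory $|\bar{\Psi}_s\rangle$, and this is not implied by the assumption on $|\Psi_s\rangle$. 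Your proposed continuity argument controls marginals of $|\bar{\Psi}_s\rangle$ only on regions far from the dropped terms; near the boundary of $R_{v'}$ the per-step truncation error is $O(1)$, so correlations involving boundary sites are simply not controlled, and it is exactly there that the Uhlmann construction for $\tilde{H}$ would need them. Relatedly, your telescoping of the truncation error propagates marginal differences through subsequent normalized non-unitary near-term factors, which are nonlinear maps with no trace-distance contraction property on marginals — note that the recursion in Eq.~(\ref{boundingerror1}) is deliberately arranged so that divergences are only ever propagated through \emph{unitaries}. The paper's proof sidesteps all of this because its $\hat{U}_s$ are built via Uhlmann's theorem from states on the exact trajectory, where the decay hypothesis is assumed, and the near/far split lives purely in the error analysis rather than in the circuit.
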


\begin{proof}
Consider the unitaries $\oper{U}_s$ obtained in the proof of Theorem \ref{bounderrors} satisfying Eq. (\ref{uhlmannbound}).

Consider the replacement of the local term of the Trotter expansion by the unitary $\hat{U}_s$ for all local terms which are more than $2C \log(1/\delta)$ sites away from the region $S$. 
Because the correlation length is always smaller than $C$, we find by Lemma 9 of \cite{brandao2015exponential} that the total error $\eta$ on the reduced density matrix in region $S$ 
can be bounded as
\begin{equation}
\eta = n \int_{2C \ln(1/ \delta)}^{\infty} e^{- l / 2C} l^d dl \leq 4 n C (2C)^{d+1} \delta \;.    
\end{equation}
For the local terms which are at most a distance $2C \log(1/\delta)$ from the region $S$, in turn, the total error is bounded by the sum of each individual term, giving
\begin{equation}
\eta = (|S| + C \log(1/\delta))^d n 2 \sqrt{2} e^{-\frac{\nu}{2C}}
\;.
\end{equation}

Choosing 
\begin{equation}
\begin{split}
\delta &= \frac{\varepsilon}{(8 n C (2C)^{d+1})} \\
\nu &= 2 C \ln\left( 2 \sqrt{2} n \left( |S| + C \ln\left(8 n C(2C)^{d+1} \varepsilon^{-1} \right)^d \right)\right) \\
\end{split}
\end{equation}
gives the result. 
\end{proof}

\vspace{0.4 cm}

 \noindent \textit{Non-local Terms:} Suppose the Hamiltonian has a term $\operSB{h}{q}$ acting on qubits which are not nearby, e.g. on two sites $i$ and $j$. Then $e^{- \dt \operSB{h}{q}}$ 
 can still be replaced by a unitary, which only acts on sites $i$  and $j$ and qubits in the neighborhoods of the two sites. This is the case if we assume that the state has a finite correlation 
 length and the proof is again an application of Uhlmann's theorem (we follow the same argument from the proof of Theorem \ref{bounderrors}  but define $R_v$ in that case as the union of 
 the neighborhoods of $i$ and $j$). Note however that the assumption of a finite correlation length might be less natural for models with long range interactions.  
 
 \vspace{0.2 cm}
 
 \noindent \textit{Scaling with temperature and increase of correlation length:} Our discussion has been based on the assumption that the correlation length $C$ is small on all intermediate states. 
 Here we discuss the range of validity of the assumption. 
 
 Let us begin with an example where the correlation length can increase very quickly with number of local terms applied (this was communicated to us by Guang Hao Low). 
 Consider a projection on two qubits $\oper{P}_{i, i+1} = |0, 0 \rangle \langle 0, 0 |_{i,i+1} + |1, 1 \rangle \langle 1, 1|_{i, i+1}$. Then  
 \begin{equation}
\oper{P}_{1, 2} \oper{P}_{2, 3} \ldots \oper{P}_{n-1, n} | + \rangle^{\otimes n}, 
\end{equation}
 with $|+ \rangle = (|0 \rangle + |1 \rangle)/ \sqrt(2)$, is the GHZ state $(|0 \ldots 0 \rangle + |1 \ldots 1 \rangle)/\sqrt{2}$, which has correlation length $C = n$. 
 While the projector $\oper{P}_{i, i+1}$ cannot appear as a local term $e^{- \dt \operSB{h}{i} }$  in the Trotter decomposition, this example show that we cannot expect 
 a speed-of-sound bound on the spread of correlations for a circuit with non-unitary gates; indeed the example shows a depth two circuit can already create long range correlations. 
 
 However, we expect that generically the correlations do grow ballistically. Consider the state 
 \begin{equation} 
 |\psi_{n} \rangle := \frac{  \left ( e^{- \dt \operSB{h}{1} } \ldots  e^{- \dt \operSB{h}{m} }  \right)^{n} | \Psi_0 \rangle   }   { \Vert  \left ( e^{- \dt \operSB{h}{1} } \ldots  e^{- \dt \operSB{h}{m} } \right)^{n} | \Psi_0 \rangle   \Vert  }.
\end{equation}
after $n$ rounds have been applied. Let us assume the Hamiltonian acts on a line, is translation invariant and has nearest-neighbor interactions. 
Then the state is a matrix product state of bond dimension at most $2^{n}$. For matrix product states we can bound the correlations as follows 
(see e.g. Lemma 22 of \cite{brandao2015exponential})
\begin{equation} 
\begin{split}
C_s(\oper{A},\oper{B}) &= \langle \Psi_s | \oper{A} \otimes \oper{B} | \Psi_s \rangle - \langle \Psi_s | \oper{A} | \Psi_s \rangle \langle \Psi_s | \oper{B} | \Psi_s \rangle \\
&\leq \Vert \oper{A} \Vert \Vert \oper{B} \Vert 2^{2n} e^{ - \Delta \text{dist}(A, B)} \;.
\end{split}
\end{equation}
where we define the gap of the matrix-product-state as $\Delta := 1 - \lambda$, with $\lambda$ the second largest eigenvalue of the transfer matrix of the matrix product state 
(normalized so that the largest eigenvalue is one). In the GHZ example above, the gap $\Delta = 0$ and that is the reason for the fast build up of correlations. 
Typically we expect the gap to be independent of $n$ or decrease mildly as $1/\text{poly}(n)$. 

From the above, we can replace a non-unitary local Trotter term applied to $|\psi_{n} \rangle$ by an unitary acting on $O(n / \Delta)$ qubits. 
Taking $n = O(\beta)$ to reach temperature $\beta$ in the imaginary time evolution, the support of the unitaries would scale as $O(\beta/\Delta)$.
Assuming $\Delta$ is a constant, we find a linear increase in temperature. 

We also expect the linear growth of correlations/unitary support with inverse temperature also to hold generically in two dimensions, 
although there the analysis is more subtle as rigorous results for the expected behavior of the transfer operator 
(which becomes a one-dimensional tensor product operator) and its gap are not available.
 
\subsection{Spreading of correlations}

In the main text, we argued that the correlation volume $V$ of the state $e^{-\beta H}|\Psi\rangle$ is bounded
for many physical Hamiltonians and saturates at the ground-state with $V \ll N$ where $N$ is the system size.
To numerically measure correlations, we use the mutual information between two sites, defined as
\begin{align}
I(i,j) =   S(i)+S(j) - S(i,j)
\end{align}
where $S(i)$ is the von Neumann entropy of the density matrix of site $i$ ($\rho(i)$) and similarly for $S(j)$, and $S(i,j)$
is the von Neumann entropy of the two-site density matrix for sites $i$ and $j$ ($\rho(i,j)$).

To compute the mutual information in Fig. 1 in the main text, we used matrix product state (MPS) and finite projected entangled pair state 
(PEPS) imaginary time evolution for the spin-$1/2$ 1D and 2D FM transverse field Ising model (TFI)
\begin{align}
\oper{H}_{TFI} = - \sum_{\langle ij \rangle} \oper{Z}_i \oper{Z}_j  - h \sum_{i} \oper{X}_i
\end{align}
where the sum over $\langle i, j \rangle$ pairs are over nearest neighbors. We use the parameter $h=1.25$ for the 1-D calculation and $h=3.5$ 
for the 2-D calculations as the ground-state is gapped in both cases.  It is known that the ground-state correlation length is finite.

\noindent \textbf{MPS}.
We performed MPS imaginary time evolution (ITE) on a 1-D spin chin with $L=50$ sites with open boundary conditions. 
We start from an initial state that is a random product state, and perform ITE using time evolution block decimation 
(TEBD) \cite{vidal2004TEBD,schollwock2011mps} with a first order Trotter decomposition. In this algorithm, 
the Hamiltonian is separated into terms operating on even and odd bonds. The operators acting on a single bond are exponentiated exactly.
One time step is given by time evolution of odd and even bonds sequentially, giving rise to a Trotter error on the order of the time step $\Delta \tau$. 
In our calculation, a time step of $\Delta \tau = 0.001$ was used. 

We carry out ITE simulations with maximum bond dimension of $D=80$, but truncate singular values less than 1.0e-8 of the maximum singular value.
In the main text, the ITE results are compared against the ground state obtained via the density matrix renormalization group (DMRG)). 
This should be equivalent to comparing to a long-time ITE ground state.  The long-time ITE ($\beta=38.352$) ground state reached an energy per site 
of -1.455071, while the DMRG ground-state energy per site is -1.455076. The relative error of the nearest neighbor correlations is on the order of 
$10^{-4}$ to $10^{-3}$, and about $10^{-2}$ for correlations between the middle site and the end sites (a distance of 25 sites). 
The error in fidelity between the two ground states was about $5\times 10^{-4}$.

\noindent \textbf{PEPS}. We carried out finite PEPS \cite{nishino1996corner,verstraete2004renormalization,
verstraete2006criticality,orus2014practical} imaginary time evolution for the two-dimensional transverse
field Ising model on a lattice
size of $21 \times 31$. The size was chosen to be large enough to see the spread of mutual information in the bulk 
without significant effects from the boundary. The mutual information was calculated
along the long (horizontal) axis in the center of the lattice.
The standard Trotterized
imaginary time evolution scheme for PEPS~\cite{VerstraeteITimeReview} was used with a time step $\Delta \tau = 0.001$,
up to imaginary time $\beta = 6.0$, starting from a random product state. To reduce computational cost from the
large lattice size, the PEPS
was defined in a translationally invariant manner with only 2 independent tensors \cite{VerstraeteIPEPS}
updated via the so-called ``simple update'' procedure \cite{XiangSimpleUpdate}. 
The simple update has been shown to be sufficiently accurate for capturing correlation 
functions (and thus $I(i,j)$) for ground states with relatively short correlation 
lengths (compared to criticality) \cite{LubaschAlgos,LubaschUnifying}. 
We chose a magnetic field value $h=3.5$ which is 
detuned from the critical field ($h \approx 3.044$) but         
still maintains a correlation length long enough to see
interesting behavior.

\noindent \textit{Accuracy:} Even though the simple update procedure was used for the tensor update,
we still needed to contract the $21 \times 31$ PEPS at at every imaginary time step $\beta$ for a range of
correlation functions, amounting to a large number of contractions.
To control the computational cost, we limited  our bond dimension to $D=5$ and used an optimized
contraction scheme \cite{XiangContract}, with maximum allowed bond dimension 
of $\chi = 60$ during the contraction.
Based on converged PEPS ground state correlation functions
with a larger bond dimension of $D=8$, our $D=5$ PEPS yields $I(i,i+r)$ (where $r$ denotes horizontal separation) at large $\beta$ with
a relative error of $\approx 1\%$ for $r=1-4$, $5\%$ or less for $r=5-8$, and $10\%$ or greater for
$r > 8$. At smaller values of $\beta$ ($< 0.5$) the errors up to $r=8$ are much smaller because 
the bond dimension of 5 is able to completely support the smaller correlations (see Fig. 1, main text).
While error analysis on the 2D Heisenberg model \cite{LubaschAlgos} suggests
that errors with respect to $D=\infty$ may be larger, such analysis also confirms that 
a $D=5$ PEPS captures the qualitative behavior of
correlation in the range $r=5-10$ (and beyond).
Aside from the bond dimension error,
the precision of the calculations is governed by $\chi$ and the lattice size. Using the $21 \times 31$
lattice and $\chi = 60$, we were able to converge entries of single-site
density matrices $\rho(i)$ to a precision of $\pm 10^{-6}$ (two site density matrices $\rho(i,j)$
had higher precision). For $\beta = 0.001-0.012$,
the smallest eigenvalue of $\rho(i)$ fell below this precision threshold, leading
to significant noise in $I(i,j)$. Thus, these values of $\beta$ are omitted from Fig. 1 (main text) 
and the smallest reported values of $I$ are $10^{-6}$, although with more precision we expect $I \to 0$ as $r \to \infty$.
 
Finally, the energy and fidelity errors were computed with respect to the PEPS
ground state of the same bond dimension at $\beta = 10.0$ (10000 time steps).
The convergence of the quantities shown in Fig. 1 (main text) thus isolates the convergence of the
imaginary time evolution, and does not include effects of other errors that 
may result from deficiencies in the wavefunction Ansatz.

\SECONDREVISION{
\subsection{Comparison to classical algorithms}

In the main text, we noted that QITE provided an exponential speedup per iteration over the direct classical implementation of
imaginary time evolution algorithm, given bounded correlation length $C$ during the evolution.
We now compare to some other possible classical algorithms.

We first note that a finite correlation length $C_0$ in the ground-state does not itself imply an efficient classical strategy.
For example, a simple heuristic is to solve the problem locally, e.g. to truncate the problem size at correlation length $C_0$ of the ground-state and solve by exact diagonalization, which can be done in time $\exp(O(C_0d))$ in $d$ spatial dimensions. But this will not generally converge to the correct ground-state in a frustrated Hamiltonian, as this would efficiently solve NP-hard classical satisfiability problems even though these have $C_0 = 0$; physical examples include glassy models. 

Similarly, as QITE defines a quantum circuit for the imaginary time evolution, we might attempt to use it for a faster classical simulation. If we are only interested in local observables, we can apply the circuit in the Heisenberg picture in a classical emulation. However, this gives an extra exponential dependence on the number of previous time-steps: after the unitaries associated to $(e^{-\Delta \tau \hat{h}[1]} e^{-\Delta \tau \hat{h}[2]} \dots)^l$ have been applied, the cost of applying the next unitary scales as $\exp(O(lD))$, with $D$ the domain size of the unitaries, instead of $\exp(O(D))$ in QITE.

Alternatively, if $| \Psi \rangle$ is represented by a tensor network in a classical simulation, then $e^{-\Delta \tau \hat{h}[l]} | \Psi \rangle$ can be represented as a classical tensor network with increased bond dimension \cite{VidalTEBD,Schollwock_Annals_2011}. However, the bond dimension will scale as $\exp(O(lD))$. Further, apart from the extra exponential dependence on $l$, another potential drawback in the tensor network approach is that we cannot guarantee contracting the resulting classical tensor network for an observable is efficient; it is a \#P-hard problem in the worst case in 2D (and even in the average case for Gaussian distributed tensors) \cite{VidalTEBD,Schollwock_Annals_2011,Schuch_PRL_2007,PEPShard2018}.
}

\subsection{Simulation models}

We here define, and give some background on, the models used in the QITE and QLanczos simulations.

\subsubsection{1 qubit field model}

\begin{align}
  \hat{H} = \alpha \hat{X} + \beta \hat{Z}
\end{align}
This Hamiltonian has previously been used as a model for quantum simulations on physical devices in Ref.~\cite{lamm2018simulation}.We used $\alpha= \frac{1}{\sqrt{2}}$ and $\beta = \frac{1}{\sqrt{2}}$. In simulations with this Hamiltonian, the qubit is assumed to be initialized in the $Z$ basis.

\subsubsection{1D Heisenberg and transverse field Ising model}

The 1D short-range Heisenberg Hamiltonian is defined as
\begin{align}
  \hat{H} =\sum_{\langle ij\rangle} \hat{\mathbf{S}}_i \cdot \hat{\mathbf{S}}_j \quad,
\end{align}
\REVISION{
the 1D short-range Heisenberg Hamiltonian in the presence of a field as
\begin{align}
  \hat{H} = J\sum_{\langle ij\rangle} \hat{\mathbf{S}}_i \cdot \hat{\mathbf{S}}_j + B\sum_{i} Z_{i}\quad,
\end{align}
}
the 1D long-range Heisenberg Hamiltonian as
\begin{align}
\hat{H} =\sum_{i \neq j} \frac{1}{|i-j|+1} \, \hat{\mathbf{S}}_i \cdot \hat{\mathbf{S}}_j \quad,
\end{align}
and the 1D AFM transverse-field Ising Hamiltonian as
\begin{align}
  \hat{H} = J\sum_{\langle ij\rangle} \oper{Z}_i  \oper{Z}_j + h\sum_i \oper{X}_i \quad .
\end{align}

\subsubsection{1D Hubbard model}

The 1D Hubbard Hamiltonian is defined as
\begin{align}
  \hat{H} = - \sum_{\langle ij \rangle \sigma} \oper{a}^\dag_{i\sigma} \oper{a}_{j\sigma} + U \sum_i \hat{n}_{i\uparrow} \hat{n}_{i\downarrow}
\end{align}
where $\hat{n}_{i \sigma} = a^\dag_{i\sigma} a_{i\sigma}$, $\sigma \in \{ \uparrow,\downarrow\}$, and $\langle \cdot \rangle$ 
denotes summation over nearest-neighbors, here with open-boundary conditions. We label the $n$
lattice sites with an index $i=0 \dots n-1$, and the $2n-1$ basis functions as $|\varphi_0 \rangle = |0 \uparrow \rangle$,
$|\varphi_1 \rangle = |0 \downarrow \rangle$, $|\varphi_2 \rangle = |1 \uparrow \rangle$,
$|\varphi_3 \rangle = |1 \downarrow \rangle$ $\dots$.
Under Jordan-Wigner transformation, recalling that
\begin{align}
\begin{split}
&\hat{n}_{p} = \frac{1-\oper{Z}_p}{2} \quad, \\
&\hat{a}^\dag_p \hat{a}_q + \hat{a}^\dag_q \hat{a}_p =  \frac{ \oper{X}_p \oper{X}_q \prod_{k=q+1}^{p-1} \oper{Z}_k \left( 1- \oper{Z}_p \oper{Z}_q \right)}{2} \quad,
\end{split}
\end{align}
with $p=0 \dots 2n-2$ and $q<p$, the Hamiltonian takes the form
\begin{align}
\begin{split}
\hat{H} &= - \sum_p \frac{ \oper{X}_{p} \oper{X}_{p+2} \oper{Z}_{p+1} \left( 1- \oper{Z}_{p} \oper{Z}_{p+2} \right)}{2} \\
&+ U \sum_{p \, \mathrm{even}} \frac{(1-\oper{Z}_{2i}) (1-\oper{Z}_{2i+1})}{4} + \mu \sum_p \frac{(1-\oper{Z}_p)}{2}
\end{split}
\end{align}

\subsubsection{H$_2$ molecule minimal basis model}

We use the hydrogen molecule minimal basis model at the STO-6G level of theory. This is a common minimal model of hydrogen chains 
\cite{hachmann2006multireference,Motta_PRX_2017} and has previously been studied in quantum simulations, for example 
in~\cite{OMalley2015}. Given a molecular 
geometry (H-H distance $R$) we perform a restricted Hartree-Fock calculation and express the second-quantized Hamiltonian
in the orthonormal basis of RHF molecular orbitals as~\cite{szaboostlund} 
\begin{equation}
\label{eq:H2}
\hat{H} = H_0 + \sum_{pq} h_{pq} \hat{a}^\dag_p \hat{a}_q + \frac{1}{2} \sum_{prqs} v_{prqs}
\hat{a}^\dag_p \hat{a}^\dag_q \hat{a}_s \hat{a}_r
\end{equation}
where $a^\dag$, $a$ are fermionic creation and annihilation operators for the molecular orbitals.
The Hamiltonian \eqref{eq:H2} is then encoded by a Bravyi-Kitaev transformation into the 2-qubit operator 
\begin{equation}
\hat{H} = g_0 + g_1 \oper{Z}_1 + g_2 \oper{Z}_2 + g_3 \oper{Z}_1 \oper{Z}_2 + g_4 \oper{X}_1 \oper{X}_2 + g_5 \oper{Y}_1 \oper{Y}_2 \;,
\end{equation}
with coefficients $g_i$ given in Table I of \cite{OMalley2015}.
 
\subsubsection{MAXCUT Hamiltonian}

The MAXCUT Hamiltonian encodes the solution of the MAXCUT problem. 
Given a graph $\Gamma = (V,E)$, where $V$
is a set of vertices and $E \subseteq V \times V$ is a set of links between vertices in $V$, a cut of $\Gamma$ is a subset 
$S \subseteq V$ of $V$. The MAXCUT problem consists in finding a cut $S$ that maximizes the number of edges between $S$ and $S^c$ (the complement of $S$).
We denote the number of links in a given cut $S$ as $C(S)$.

In Figure 3 of the main text, we consider a graph $\Gamma$ with vertices and links
\begin{equation}
\begin{split}
V &= \{0,1,2,3,4,5\} \;, \\
E &= \{ (0,3), (1,4), (2,3), (2,4), (2,5), (4,5) \} \;, \\
\end{split}
\end{equation}
respectively. It is easy to verify that $S = \{ 0, 2, 4 \}$, $\{ 0,1,2 \}$, $\{3,4\}$ and their complements $S^c$ 
are solutions of the MAXCUT problem, with weight $C_{max} = 5$.

The MAXCUT problem can be formulated as a Hamiltonian ground-state problem, by (i) associating a qubit to every vertex in $V$, (ii) associating to every
partition $S =$ an element of the computational basis (here assumed to be in the $z$ direction) of the form $| z_0 \dots z_{n-1} \rangle$, where $z_i = 1$ if 
$i \in S$ and $z_i = 0$ if $i \in S^c$, and finding the minimal (most negative) eigenvalue of the $2$-local Hamiltonian 
\begin{equation}
\hat{C} = -\sum_{(ij) \in E} \frac{1 - \oper{Z}_i \oper{Z}_j }{2} \quad .
\end{equation} 
The spectrum of $\hat{C}$ is a subset of numbers $C \in \{ 0,1 \dots |E| \}$.

In the present work, we initialize the qubits in the state $|\Phi \rangle = 
|+\rangle^{\otimes n}$, where $|+\rangle = \frac{|0\rangle + |1\rangle}{\sqrt{2}}$, and evolve $\Phi$ in imaginary time.
Measuring the evolved state at time $\beta$ $|\Phi(\beta) \rangle$ will collapse it onto an element $|z_0 \dots z_{n-1} \rangle$ of the
computational basis, which is also an eigenfunction of $\hat{C}$ with eigenvalue $C$. 
In Figure 3 in the main text, we illustrate the probability $P(|C|=C_{max})$ that such measurements yield a MAXCUT solution. Note that,
even in the presence of oscillations (with the smallest domain size $D=2$) this probability remains above $60 \%$.

\subsection{Numerical simulation details}

\subsubsection{QITE stabilization}

Sampling noise in the expectation values of the Pauli operators can affect the solution to Eq.~\eqref{eq:lineareq} 
that sometimes lead to numerical instabilities. We regularize $\mathbf{S}+\mathbf{S}^T$ against such statistical 
errors by adding a small $\delta$ to its diagonal. To generate the data presented in Figures 2 and 4 of the main 
text, we used $\delta=0.01$ for 1-qubit calculations and $\delta=0.1$ for 2-qubit calculations.

\subsubsection{QLanczos stabilization}

In quantum Lanczos, we generate a set of wavefunctions for different imaginary-time projections of 
an initial state $| \Psi \rangle$, using QITE as a subroutine. The normalized states are
\begin{equation}
| \Phi_l \rangle = \frac{ e^{- l \Delta \tau \hat{H} } | \Psi_T \rangle }{\| e^{- l \Delta \tau \hat{H} } \Psi_T \|} 
\equiv n_l \, e^{- l \Delta \tau \hat{H} } | \Psi_T \rangle \quad 0 \leq l < L_\text{max} \quad .
\end{equation}
where $n_l$ is the normalization constant.
For the exact imaginary-time evolution and $l$, $l^\prime$ both even (or odd) the matrix elements
\begin{equation}
S_{l,l^\prime} = \langle \Phi_l | \Phi_{l^\prime} \rangle 
\quad,\quad
H_{l,l^\prime} = \langle \Phi_l | \hat{H} | \Phi_{l^\prime} \rangle 
\end{equation} 
can be computed in terms of expectation values (i.e. experimentally accessible quantities) only. Indeed, defining
$2r = l+l^\prime$, we have 
\begin{equation}
S_{l,l^\prime} = n_l n_{l^\prime} \, \langle \Psi_T | e^{- l \Delta \tau \hat{H} } e^{- l^\prime \Delta \tau \hat{H} } | \Psi_T \rangle
= \frac{n_l n_{l^\prime}}{n_{r}^2} \quad ,
\end{equation} 
and similarly
\begin{equation}
\begin{split}
H_{l,l^\prime} &= n_l n_{l^\prime} \, \langle \Psi_T | e^{- l \Delta \tau \hat{H} } \hat{H} e^{- l^\prime \Delta \tau \hat{H} } | \Psi_T \rangle
= \\
&= \frac{n_l n_{l^\prime}}{n_{r}^2} \, \langle \Phi_r | \hat{H} | \Phi_r \rangle = S_{l,l^\prime} \, \langle \Phi_r | \hat{H} | \Phi_r \rangle \quad .
\end{split}
\end{equation}
The quantities $n_r$ can be evaluated recursively, since
\begin{equation}
\begin{split}
\frac{1}{n^2_{r+1}} &= \langle \Psi_T | e^{- (r+1) \Delta \tau \hat{H} } e^{- (r+1) \Delta \tau \hat{H} } | \Psi_T \rangle = \\
&= \frac{ \langle \Phi_r | e^{-2 \Delta \tau \hat{H} } | \Phi_r \rangle }{n_r^2} \quad,
\end{split}
\end{equation}
For inexact time evolution, the quantities $n_r$ and $\langle \Phi_r | \hat{H} | \Phi_r \rangle$ can still be used to 
approximate $S_{l,l^\prime}$, $H_{l,l^\prime}$.

Given these matrices, we then solve the generalized  eigenvalue equation $\mathbf{H}\mathbf{x} = E \mathbf{S}\mathbf{x}$ to find an approximation
to the ground-state $| \Phi' \rangle = \sum_l x_l | \Phi_l \rangle$ for the ground state of $\oper{H}$. This eigenvalue equation
can be numerically ill-conditioned, as $S$ can contain small and negative eigenvalues for several reasons (i)
as $m$ increases the vectors $|\Phi_l \rangle$ become linearly dependent; (ii) simulations have finite
precision and noise; (iii) $S$, $H$ are computed approximately when inexact time evolution is performed.

To regularize the problem, out of the set of time-evolved states we extract a better-behaved sequence as follows
(i) start from $|\Phi_\text{last}\rangle = |\Phi_0\rangle$ (ii) add the next $|\Phi_l\rangle$ in the set
of time-evolved states s.t. $|\langle \Phi_l | \Phi_\text{last}\rangle| < s$, where $s$
is a regularization parameter $0<s<1$ (iii) repeat, setting the $|\Phi_\text{last}\rangle=\Phi_l$ (obtained from (ii)), until
the desired number of vectors is reached.
We then solve the generalized eigenvalue equation $\tilde{\mathbf{H}}\mathbf{x} = E \tilde{\mathbf{S}}\mathbf{x}$ 
spanned by this regularized sequence, removing any eigenvalues of $\tilde{\mathbf{S}}$ less than a threshold $\epsilon$. 
The exact emulated QLanczos calculations reported in the main text were stabilized with this algorithm (the source
of error here is primarily (iii)) using stabilization parameter $s=0.95$ and $\epsilon = 10^{-14}$.
The stabilization parameters used in the 
QVM and QPU QLanczos calculations were $s=0.75$ and $\epsilon = 10^{-2}$ (the main source
of error in the simulations was (ii)). Note that the stabilization procedure is unlikely to fix all possible numerical instabilities, but was sufficient
for all models and calculations performed in this work.

\subsection{METTS algorithm}

The METTS (minimally entangled typical thermal state) algorithm~\cite{Stoudenmire2010,White2009} is a sampling method to
calculate thermal properties
based on imaginary time evolution. Consider the thermal average of 
an observable $\hat{O}$
\begin{equation}
\langle \hat{O} \rangle = \frac{1}{Z}\mathrm{Tr}[ e^{-\beta \hat{H}} \hat{O} ] 
=  \frac{1}{Z}\sum_{i} \langle i | e^{- \frac{\beta}{2} \oper{H} } \, \hat{O} \, e^{- \frac{\beta}{2} \oper{H} } | i \rangle \;,
\end{equation}
where $\{|i\rangle\}$ is an orthonormal basis set, and $Z$ is the partition 
function. Defining $|\phi_i\rangle = P_i^{-1/2} e^{- \frac{\beta}{2} \oper{H} } | i \rangle$ with
$P_i = \langle i | e^{-\beta \oper{H}} | i \rangle$,
we obtain
\begin{equation}\label{eq:thermal_sum}
\langle \hat{O}\rangle = \frac{1}{Z} \sum_i P_i \langle \phi_i| \hat{O} |\phi_i\rangle \;.
\end{equation}
The summation in Eq. (\ref{eq:thermal_sum}) can be estimated by sampling 
 $|\phi_i\rangle$ with probability $P_i/Z$, and summing 
the sampled $\langle \phi_i|\hat{O}|\phi_i\rangle$. 

In standard Metropolis sampling for thermal states, one starts from $| \phi_i \rangle$ and obtains the next state
$| \phi_j \rangle$ from randomly proposing and accepting based an
acceptance probability. However, rejecting and resetting  in the quantum analog of Metropolis~\cite{Temme2011} is complicated to
implement on a quantum computer, requiring deep circuits.
The METTS algorithm provides an alternative way to sample 
$| \phi_i \rangle$ distributed with probability $P_i/Z$ without this complicated procedure. 
The algorithm is as follows:

\begin{enumerate}
\item Choose a classical product state (PS) $| i \rangle$.
\item Compute $| \phi_i \rangle = P_i^{-1/2} e^{- \frac{\beta}{2} \oper{H} } | i \rangle$ and 
calculate observables of interest.
\item Collapse the state $| \phi_i \rangle$ to a new PS $|i'\rangle$ with probability
$p(i \rightarrow i') = |\langle i' | \phi_i \rangle|^2$ and repeat Step 2.
\end{enumerate}
In the above algorithm, $|\phi_i\rangle$ is named a minimally entangled typical
thermal state (METTS). 
One can easily show that the set of METTS sampled following the above 
procedure has the correct Gibbs distribution~\cite{Stoudenmire2010}.
Generally, $\{|i\rangle\}$ can be any orthonormal basis. 
For convenience when implementing METTS on a quantum computer, 
$\{|i\rangle\}$ are chosen to be product states. 
On a quantum emulator or a quantum computer, the METTS algorithm is carried out as following:

\begin{enumerate}
\item Prepare a product state $|i\rangle$.
\item Imaginary time evolve $|i\rangle$ with the QITE algorithm to 
$|\phi_i\rangle = P_i^{-1/2} e^{- \frac{\beta}{2} \oper{H} } | i \rangle$, and measure 
the desired observables.
\item Collapse $|\phi_i\rangle$ to another product state by measurement.
\end{enumerate}
In practice, to avoid  long statistical correlations between samples, we used
the strategy of collapsing METTS onto alternating basis sets~\cite{Stoudenmire2010}.  
For instance, for the odd METTS steps, $|\phi_i\rangle$ is collapsed
onto the $X$-basis (assuming a $Z$ computational basis, tensor products of $|+\rangle$ and $|-\rangle$), and for
the even METTS steps, $|\phi_i\rangle$ is collapsed onto the $Z$-basis
(tensor products of $|0\rangle$ and $|1\rangle$). The statistical error is then estimated by block analysis~\cite{Flyvbjerg1989}.

\subsection{Implementation on emulator and quantum processor}

We used pyQuil, an open source Python library, to express quantum circuits that interface with both Rigetti's quantum virtual machine (QVM) and
the Aspen-1 quantum processing units (QPUs).

pyQuil provides a way to include noise models in the QVM simulations. Readout error can be included in a
high-level API provided in the package and is characterized by $p_{00}$ (the probability of reading $|0\rangle$ given that the qubit
is in state $|0\rangle$) and $p_{11}$ (the probability of reading $|1\rangle$ given that the qubit is in state $|1\rangle$). Readout errors
can be mitigated by estimating the relevant probabilities and correcting the estimated expectation values. We do so by using a high level
API present in pyQuil. A general noise model can also be applied to a gate in the circuit by applying the appropriate Kraus maps.
Included in the package is
a high level API that applies the same decoherence error attributed to energy relaxation and dephasing to every gate in the circuit.
This error channel is characterized by the relaxation time $T_{1}$ and coherence time $T_{2}$. We also include in our emulation
our own high-level API that applies the same depolarizing noise channel to every single gate by using the appropriate Kraus maps.
The depolarizing noise is characterized by $p_{1}$, the depolarizing probability for single-qubit gates and $p_{2}$, the
depolarizing probability for two-qubit gates. We do not include all sources of error in our emulation. We applied the same depolarizing and dephasing channels to each gate operation for all qubits, when in reality, they can vary from qubit to qubit. In addition, noise due to cross-talk between qubits cannot
be modeled using the QVM and is another source of discrepancy between the QVM and QPU results.

We investigate the influence of noise on the 2-qubit results obtained via the QVM using different noise parameters; 
\begin{itemize}
\REVISION{
\item Noise model 1:

\begin{tabular}{lll}
$p_{00} = 0.95$ & & $p_{11} = 0.95$ \\
$T_{1} = 10.5 \, \mu s$ & & $T_{2} = 14.0 \, \mu s$ \\
$p_{1} = 0.001$ & & $p_{2} = 0.01$ \\
\end{tabular}

\item Noise model 2: 

\begin{tabular}{lll}
$p_{00} = 0.99$ & & $p_{11} = 0.99$ \\
$T_{1} = 10.5 \, \mu s$ & & $T_{2} = 14.0 \, \mu s$ \\
$p_{1} = 0.001$ & & $p_{2} = 0.01$ \\
\end{tabular}

\item Noise model 3: 

\begin{tabular}{lll}
$p_{00} = 0.99$ & & $p_{11} = 0.99$ \\
$T_{1} = 20.0 \, \mu s$ & & $T_{2} = 40.0 \, \mu s$ \\
$p_{1} = 0.0001$ & & $p_{2} = 0.001$ \\
\end{tabular}
}
\end{itemize}  
Noise model 1 reflects realistic parameters that characterize the Aspen-1 QPUs we run our calculations on; 
$p_{00}$, $p_{11}$, $T_{1}$, and $T_{2}$ are reported values whereas  $p_{1}$ and $p_{2}$ are values typically 
used to benchmark error mitigation algorithms~\cite{Temme2017}. We repeated 10 calculations for each noise model 
and note there is practically no variation from run to run. Fig.~\ref{fig:SI_1}(a) shows that reducing the readout error 
does not greatly affect the converged ground state energy after readout error mitigation has been performed. 
However, reducing the other sources of error does improve the converged energy. 
Note that sufficient measurement samples are used such that the sampling variance is smaller than that due to noise.

We also ran 2-qubit simulations on different pairs of qubits on Aspen-1, with Q1 consisting of qubits 14, 15 and Q2 consisting of qubits 0,1. 
These two pairs are reported to have different noise characteristics,
\begin{itemize}
\REVISION{
\item Q1: 

\begin{tabular}{lll}
$p_{00} = 0.95$ & & $p_{11} = 0.95$ \\
$T_{1} = 10.5 \, \mu s$ & & $T_{2} = 14.0 \, \mu s$ \\ 
\end{tabular}

\item Q2: 

\begin{tabular}{lll}
$p_{00} = 0.90$ & & $p_{11} = 0.90$ \\
$T_{1} = 6.5 \, \mu s$ & & $T_{2} = 8.0 \, \mu s$ \\ 
\end{tabular}

}
\end{itemize}
Based on this, we expect simulations on Q2 to be worse.
Note that in contrast to our QVM calculations, the results from the actual devices varied from run to run. 
Thus, we present the mean and standard deviation for 10 different runs on each pair. 
(Similarly, sufficient samples are taken when running the QVM such that the sampling variance is smaller than that due to noise).
Fig.~\ref{fig:SI_1}(b) indeed demonstrates that Q2 provides a less faithful implementation of the quantum algorithm.

\begin{figure}[h!]
\includegraphics[width=\columnwidth]{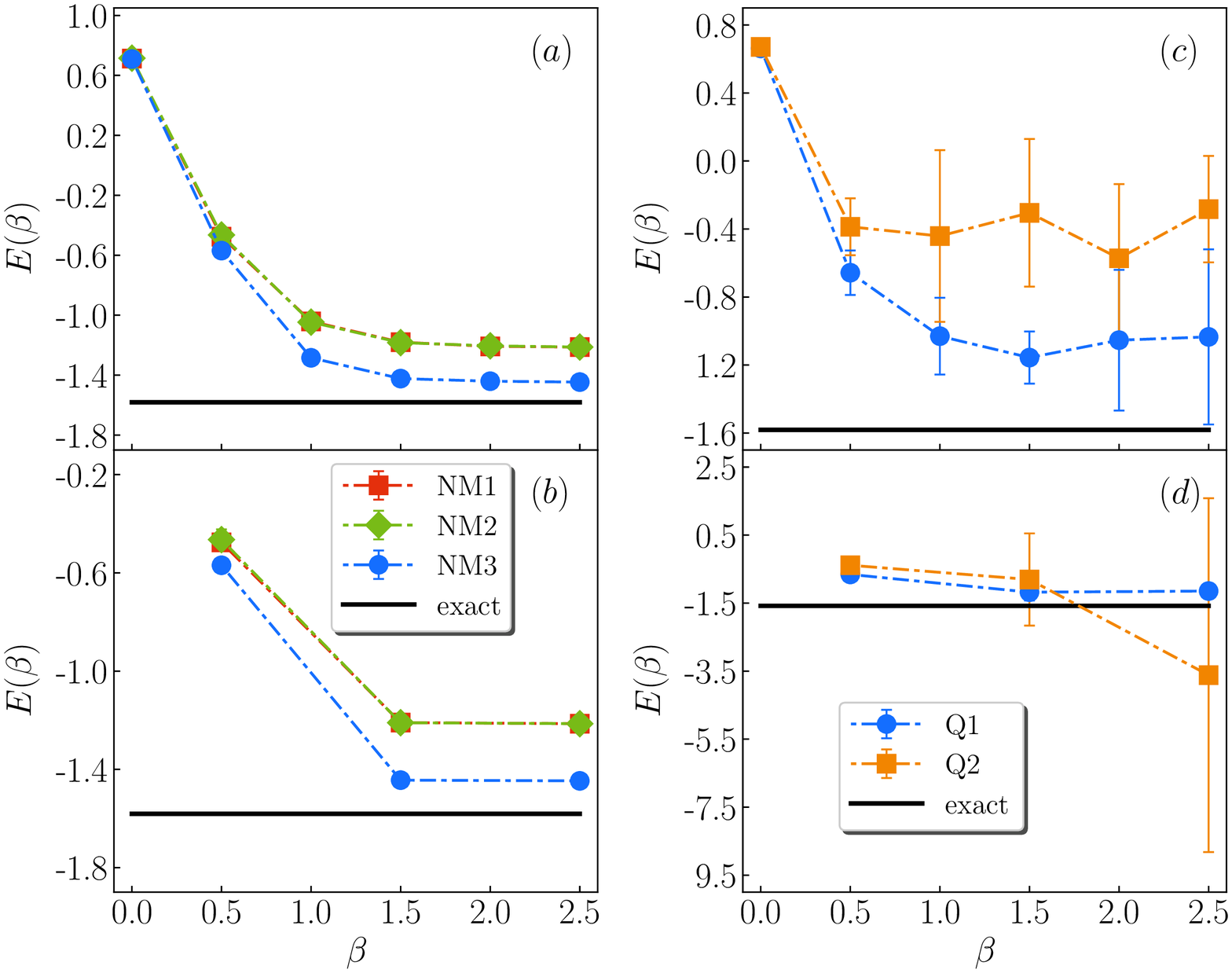} 
\caption{Comparison of energies obtained using different noise models(NM) for (a) QITE and (b) QLanczos. Comparison of energies obtained using different pair of qubits for (c) QITE and (d) QLanczos. The performance of QITE and QLanczos improves as noise is reduced, indicating the potential of the algorithms. 
}
\label{fig:SI_1}
\end{figure}

\subsubsection{Parameters used in QVM and QPUs simulations}

In this section, we include the parameters used in our QPU and QVM simulations.
Note that all noisy QVM simulations (unless stated otherwise in the text) were performed with noise parameters from noise model 1. 
We also indicate the number of samples used during measurements for each Pauli operator.

\begin{table}[h!]
\label{tab:table1}
  \begin{center}
    \caption{QPUs: 1-qubit QITE and QLanczos.}
    \begin{tabular}{l|c|c|c|r} 
      \textbf{Trotter stepsize} & \textbf{nSamples} & \textbf{$\delta$} & \textbf{s} & \textbf{$\epsilon$}\\
      \hline
      0.2 & 100000 & 0.01 & 0.75 & $10^{-2}$\\
    \end{tabular}
  \end{center}
\end{table}

\begin{table}[h!]
  \begin{center}
    \caption{QPUs: 2-qubit QITE and QLanczos.}
    \begin{tabular}{l|c|c|c|r} 
      \textbf{Trotter stepsize} & \textbf{nSamples} & \textbf{$\delta$} & \textbf{s} & \textbf{$\epsilon$}\\
      \hline
      0.5 & 100000 & 0.1 & 0.75 & $10^{-2}$\\
    \end{tabular}
  \end{center}
\end{table}

\begin{table}[h!]
  \begin{center}
    \caption{QPUs: 1-qubit METTS.}
    \begin{tabular}{l|c|c|c|r} 
      \textbf{$\beta$} & \textbf{Trotter stepsize} & \textbf{nSamples} & \textbf{nMETTs} & \textbf{$\delta$}\\
      \hline
      1.5 & 0.15 & 1500 & 70 & 0.01\\
      2.0 & 0.20 & 1500 & 70 & 0.01\\
      3.0 & 0.30 & 1500 & 70 & 0.01\\
      4.0 & 0.40 & 1500 & 70 & 0.01\\
    \end{tabular}
  \end{center}
\end{table}

\begin{table}[h!]
  \begin{center}
    \caption{QVM: 2-qubit QITE and QLanczos.}
    \begin{tabular}{l|c|c|c|r} 
      \textbf{Trotter stepsize} & \textbf{nSamples} & \textbf{$\delta$} & \textbf{s} & \textbf{$\epsilon$}\\
      \hline
      0.5 & 100000  & 0.1 & 0.75  & $10^{-2}$\\
    \end{tabular}
  \end{center}
\end{table}

\begin{table}[h!]
  \begin{center}
    \caption{QVM: 1-qubit METTS.}
    \begin{tabular}{l|c|c|c|r} 
      \textbf{$\beta$} & \textbf{Trotter stepsize} & \textbf{nSamples} & \textbf{nMETTs} & \textbf{$\delta$}\\
      \hline
      1.0 & 0.10 & 1500 &70 & 0.01\\
      1.5 & 0.15 & 1500 &70 & 0.01\\
      2.0 & 0.20 & 1500 &70 & 0.01\\
      3.0 & 0.30 & 1500 &70 & 0.01\\
      4.0 & 0.40 & 1500 &70 & 0.01\\
    \end{tabular}
  \end{center}
\end{table}

\begin{table}[h!]
 \begin{center}
    \caption{QVM: 2-qubit METTS.}
    \begin{tabular}{l|c|c|c|r} 
      \textbf{$\beta$} & \textbf{Trotter stepsize} & \textbf{nSamples} & \textbf{nMETTs} & \textbf{$\delta$}\\
      \hline
      1.0 & 0.10 & 30000 & 100 & 0.1\\
      1.5 & 0.15 & 30000 & 100 & 0.1\\
      2.0 & 0.20 & 30000 & 100 & 0.1\\
      3.0 & 0.30 & 30000 & 100 & 0.1\\
      4.0 & 0.40 & 30000 & 100 & 0.1\\
    \end{tabular}
  \end{center}
\end{table}

\clearpage

\subsection{Comparison of QITE and VQE}

To address the feasibility of running QITE for larger systems on near-term devices, we compared the total number of
Pauli string measurements needed for both VQE and QITE to obtain the ground state of two different spin models; (a) a 1D Heisenberg chain in a magnetic field with
the parameters $J=B=1$; the 4-site instance of this model was  studied in Ref.~\cite{kandala2017}, and (b) 1D AFM transverse-field Ising model ($J=h=1/\sqrt{2}$). Specifically, we estimated how
many expectation values of Pauli strings would need to be measured to obtain the ground state of a 4-site and 6-site instance. The state under the evolution of QITE or in VQE was said to be converged to the ground state if its energy was within 1$\%$  of the exact ground state energy for the Heisenberg
model, and 1$\%$ or 2$\%$ for the Ising model (the relaxed criterion for the Ising model was chosen so that the VQE optimization could
complete in a reasonable number of steps).
The following section describes how we counted the total number of Pauli string measurements in VQE and QITE.

\subsubsection{Counting Pauli strings in VQE}
To perform the VQE calculations, we used the hardware-efficient variational Ansatz as described in ~\cite{kandala2017}. This consists of first applying rotation unitaries represented by $U^{q,i}(\boldsymbol{\theta}) = Rz_{\theta^{q,i}_{1}}Rx_{\theta^{q,i}_{2}}Rz_{\theta^{q,i}_{3}}$ to all qubits before applying layers of a certain depth $d$; each layer begins by applying $CZ$ gates between nearest-neighbors followed by applying $U^{q,i}(\boldsymbol{\theta}) = Rz_{\theta^{q,i}_{1}}Rx_{\theta^{q,i}_{2}}Rz_{\theta^{q,i}_{3}}$ to all qubits again. Details of the circuit can be seen in Fig.~\ref{fig:vqe_circuit}. 

\begin{figure*}[t!]
\includegraphics[width=0.7\textwidth]{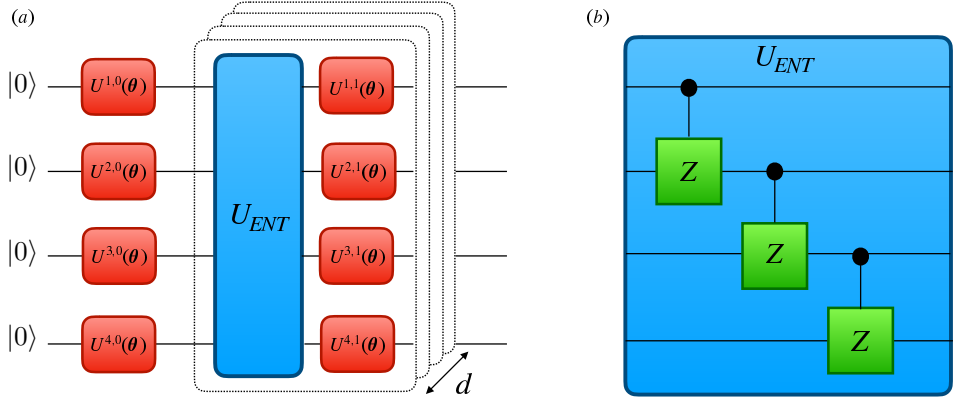}
\caption{\REVISION{(i) VQE Ansatz that is composed of a sequence of interleaved single-qubit rotations $U^{q,i}(\boldsymbol{\theta})$ and entangling operations. (ii) The entangling operations consist of applying $CZ$ gates between nearest neighbours.}}
\label{fig:vqe_circuit}
\end{figure*}

As in ~\cite{kandala2017}, we also used the simultaneous perturbation stochastic approximation (SPSA) algorithm as the optimization protocol. The SPSA algorithm is commonly used because (i) it performs well in the presence of stochastic fluctuations and (ii) it requires only evaluating the objective function twice to update the variational parameters regardless of the number of parameters involved. The performance of the optimizer depends on the hyperparameters $\alpha$ and $\gamma$ as described in ~\cite{kandala2017} and we found that their reported values of $\alpha=0.602$ and $\gamma=0.101$ also gave the best results for us. Numerical evidence of this is provided later on.

Evaluating the objective function involves estimating the expectation value of the Pauli strings that appear in the Hamiltonian. To prevent sampling errors from influencing the comparison,  we evaluated the expectation values exactly. We conducted the VQE calculations using Qiskit, a quantum emulator Python package provided by IBM. The package provides both the SPSA algorithm and a variational Ansatz which we modified to reproduce the exact Ansatz used in~\cite{kandala2017}.

\begin{figure}[t!]
\includegraphics[width=\columnwidth]{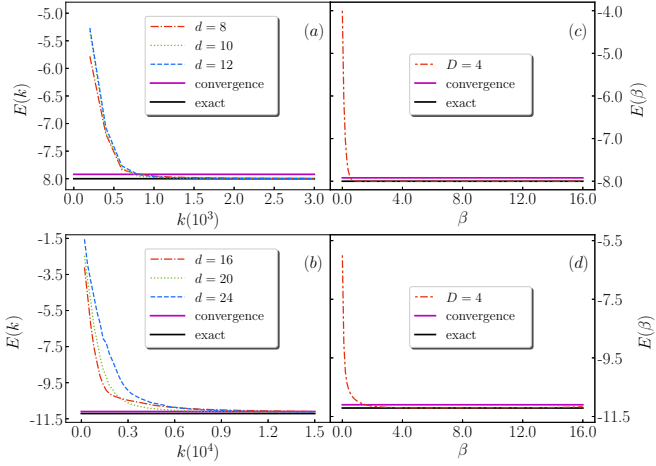} 
\caption{\REVISION{VQE calculations for (a) 4-site and (b) 6-site 1D Heisenberg model. $k$ is the number of optimization steps, and $d$ is the number of layers.
  QITE calculations for (c) 4-site and (d) 6-site for the same model. $D$ is the domain size.}}
\label{fig:calculations}
\end{figure}

To count the number of Pauli strings needed for convergence, we ran VQE using different layer depths and determined the number of iterations $N$ needed for the algorithm to converge to a state with an energy  within a certain percentage (1$\%$ or 2$\%$) of the exact ground state energy. Examples of converged VQE calculations for the 1D Heisenberg model are given in Fig.~\ref{fig:calculations}(a) and (b). In each iteration, the objective function was evaluated twice, and the evaluation of the objective function required measuring the expectation value of the $M$ Pauli strings that appear in the Hamiltonian. Therefore, the total number of
Pauli strings $P_{total}$ is given as
\begin{equation}
P_{total} = 2 \times N \times M
\end{equation}
We note that the results from one VQE trajectory can differ slightly from the next. For our VQE calculations, we always performed 10 trajectories and analyzed our data using the average trajectory. We summarize the VQE parameters that we found gave the
lowest number of total Pauli measurements to converge to the ground state in Table ~\ref{table:vqe}. For the 6-site 1D AFM transverse field Ising model, VQE could not converge to within 1$\%$ and we instead used the 2$\%$ convergence criterion.  We found that for $\alpha = 0.602, \gamma = 0.101$, our simulation results for the 6-site 1D Heisenberg model indicates that using a circuit depth of 20 requires the least number of total Pauli measurements. We also ran some tests to determine what values of $\alpha $ and $\gamma$ gave the best result for the 6-site 1D Heisenberg model; the VQE calculation for the 6-site model conducted using $\alpha=0.602$, $\gamma=0.101$ and a circuit depth of 20 converged within 8400 optimization steps. We ran VQE calculations for different $\alpha$ and $\gamma$ using the same circuit depth of 20 and a total of 9000 optimization steps. The data in table ~\ref{table:hyperparameter} clearly shows that $\alpha=0.602$ and $\gamma=0.101$ produced the best result for us.  

\subsubsection{Counting Pauli strings in QITE}
To implement QITE, we used the second-order Trotter decomposition given by
\begin{align}
  e^{-\beta \oper{H}} &= (e^{-\ts/2 \operSB{h}{1}} \ldots e^{-\ts/2 \operSB{h}{K-1}}e^{-\ts \operSB{h}{K}}  \\ \nonumber
  &\quad e^{-\ts/2 \operSB{h}{K-1}} \ldots e^{-\ts \operSB{h}{1}})^n + \mathcal{O}\left( {\ts}^{2} \right); \ n= \frac{\beta}{\ts}
\end{align}
to carry out the real time evolution. We initialized our state as: (a) $|0101\ldots \rangle$  for the 1D Heisenberg model and (b) maximally-mixed state for the 1D AFM transverse-field Ising model. We converged to the ground state using a time step of $\ts = 0.1$ and a domain size $D$ of 4, as seen in Figs.~\ref{fig:calculations}(c) and (d). To count the number of Pauli strings, we note that a domain size of 4 implies that to evaluate $e^{-\ts/2 \operSB{h}{i}}$ involves measuring $4^{4} = 256$ Pauli strings (without using the real-valued nature of the Hamiltonian). Therefore, with a total number of Trotter steps $T$, the total number of Pauli strings $P_{total}$ is given as
\begin{equation}
P_{total} = (2K-1) \times T \times 256
\end{equation}
We summarize the parameters that we used for QITE to obtain the ground state in table ~\ref{table:qite}. We had no trouble converging our ground state
to arbitrary accuracy using QITE but we used the same convergence criterion as for VQE to facilitate comparison.

\subsubsection{VQE and QITE}
Data from Table ~\ref{table:qite_vqe} suggests that QITE is competitive with VQE with respect to the number of Pauli string measurements.
In fact, for the 6 qubit system, the number of measurements needed in QITE was significantly less than in VQE, due largely
to the SPSA  iterations needed to reach convergence when optimizing the VQE energy. While it is likely that the VQE costs
could be lowered by using a better optimizer, or a better VQE Ansatz, we also note that the counts for QITE can also be reduced by using the methods outlined in the main text and earlier sections that discussed how one can economize measurements in QITE.
The widespread current implementation
of VQE and the observed performance of QITE suggest that it will be practical to implement the QITE protocol for intermediate system sizes
on near-term devices.

\begin{table}[htp!]
 \begin{center}
   \caption{\REVISION{VQE simulation parameters for (a) 1D Heisenberg with applied field and (b) 1D AFM transverse field Ising. Conv. refers to the convergence criterion used. We note for the last case, the VQE optimization could not reach within 1$\%$ of the ground state energy,
       so we set the convergence criterion to 2 $\%$}}
    \label{table:vqe}
    \begin{tabular}{c|c|c|c|c|c|c|r} 
      model & n-site & conv. & $\alpha$ & $\gamma$ & $d$ & $N$ &$P_{total}$\\
      \hline
      a & 4 & 1$\%$ & 0.602 & 0.101 & 8 & 800 & 25,600\\
      a & 6 & 1$\%$ & 0.602 & 0.101 & 20 & 8400 & 403,200\\
      b & 4 & 1$\%$ & 0.602 & 0.101 & 12 & 800 & 12,800\\
      b & 6 & 2$\%$ & 0.602 & 0.101 & 12 & 2890 & 69,360\\
    \end{tabular}
  \end{center}
\end{table}

\begin{table}[htp!]
 \begin{center}
    \caption{\REVISION{Hyperparameters sweep for 6-site 1D Heisenberg model using a circuit depth of 20 for a total of 9000 optimization steps. The step at which the calculation converged is recorded under column $T$. '-' indicates that VQE failed to converge.}}
    \label{table:hyperparameter}
    \begin{tabular}{l|c|r} 
      $\alpha$ & $\gamma$ & $T$\\
      \hline
      0.400 & 0.066 & -\\
      0.400 & 0.101 & -\\
      0.400 & 0.133 & -\\
      0.602 & 0.066 & -\\
      0.602 & 0.101 & 8400\\
      0.602 & 0.133 & 8800\\
      0.800 & 0.066 & -\\
      0.800 & 0.101 & -\\
      0.800 & 0.133 & -\\
    \end{tabular}
  \end{center}
\end{table}

\clearpage
\begin{table}[htp!]
 \begin{center}
    \caption{\REVISION{QITE simulation parameters for (a) 1D Heisenberg with applied field and (b) 1D AFM transverse field Ising. Conv. indicates the convergence criterion used. We used 2$\%$ for the final calculation to facilitate comparison with VQE which failed to converge to within 1$\%$.}}
    \label{table:qite}
    \begin{tabular}{c|c|c|c|c|c|c|r} 
      model & n-site & conv. & $\ts$ & $D$ & $T$  & K & $P_{total}$\\
      \hline
      a & 4 & 1$\%$ & 0.1 & 4 & 7 & 4 & 12,544\\
      a & 6 & 1$\%$ & 0.1 & 4 & 17 & 6 & 47,872\\
      b & 4 & 1$\%$ & 0.2 & 4 & 7 & 4 & 12,544\\
      b & 6 & 2$\%$ & 0.2 & 4 & 8 & 6 & 22,528\\
    \end{tabular}
  \end{center}
\end{table}

\begin{table}[htp!]
 \begin{center}
   \caption{\REVISION{Total Pauli string expectation values in VQE and QITE for (a) 1D Heisenberg with applied field and (b) 1D AFM transverse field Ising. The total number of Pauli strings to be measured in QITE can be further reduced by using only Pauli strings with only an odd number of $\hat{Y}$ operators due to the
       real nature of the Hamiltonian. We show this reduced number in brackets.}}
    \label{table:qite_vqe}
    \begin{tabular}{c|c|r|r} 
      model & n-site & VQE & QITE\\
      \hline
      a & 4 & 25,600 & 12,544(5,880)\\
      a & 6 & 403,200 & 47,872(22,440)\\
      b & 4 & 12,800 & 12,544(5,880)\\
      b & 6 & 69,360 & 22,528(10,560)\\
    \end{tabular}
  \end{center}
\end{table}

\end{document}